\setlist[itemize]{noitemsep, topsep=3pt}
\setlist[enumerate]{noitemsep, topsep=3pt}
\newcommand{\tabincell}[2]{\begin{tabular}{@{}#1@{}}#2\end{tabular}}
\newtheorem{thm}{Theorem}
\newtheorem{lemma}[thm]{Lemma}
\newtheorem{proposition}[thm]{Proposition}
\newcommand{\cev}[1]{\reflectbox{\ensuremath{\vec{\reflectbox{\ensuremath{#1}}}}}}
\newtheorem{assumption}[thm]{Assumption}
\newcommand*\samethanks[1][\value{footnote}]{\footnotemark[#1]}
\title{Predictive Flows for Faster Ford-Fulkerson}
\begin{document}

\author{Sami Davies\thanks{Northwestern University. \texttt{sami@northwestern.edu}. Supported by an NSF Computing Innovation Fellowship.}, Benjamin Moseley\thanks{Carnegie Mellon University. \texttt{moseleyb@andrew.cmu.edu}. Supported in part by a Google Research Award, an Inform Research Award, a Carnegie Bosch Junior Faculty Chair, and NSF grants CCF-2121744 and  CCF-1845146.}, Sergei Vassilvitskii\thanks{Google Research -- New York. \texttt{sergeiv@google.com}, \texttt{wangyy@google.com}.}, Yuyan Wang\samethanks}

\maketitle

\begin{abstract}
    
Recent work has shown that leveraging learned predictions can improve the running time of algorithms for bipartite matching and similar combinatorial problems.  In this work, we build on this idea to improve the performance of the widely used Ford-Fulkerson algorithm for computing maximum flows by seeding Ford-Fulkerson with predicted flows. 
Our proposed method offers strong theoretical performance in terms of the quality of the prediction.
We then consider image segmentation,  a common use-case of flows in computer vision, and complement our theoretical analysis with strong empirical results. 
\end{abstract}



\section{Introduction}
The Ford-Fulkerson method is one of the most ubiquitous in combinatorial optimization, 
both in theory and in practice. 
While it was first developed for solving the maximum flow problem, 
many problems in scheduling \citep{ahuja1993network}, computer vision \citep{vineet2008cuda}, 
resource allocation, matroid intersection \citep{ImMP21}, and other areas 
are solvable by finding a reduction to a flow problem.

Theoretically, max flow algorithms exist that are asymptotically much faster than 
the original Ford-Fulkerson formulation, 
most recently the near-linear time algorithm of \citet{KLPPS22}. 
However---as often happens---algorithms with great theoretical guarantees  
might be difficult to implement in practice.  Indeed, algorithms used in practice still leave room for improvement. 
In fact, for computing max flows in networks, 
practitioners often stick to older 
algorithms such as Dinic's algorithm \citep{BhadraKK20}, 
push-relabel \citep{cherkassky1997implementing}, pseudoflow \citep{chandran2009computational},
or these algorithms layered with heuristics to fit particular settings.

When flow algorithms are deployed in practice, 
they are often used to solve several problem instances arising naturally over time.  However, the theoretical analysis, as well as many implementations, 
considers solving each new problem from scratch to derive worst-case guarantees. 
This approach needlessly discards information that may exist between instances. 
We are interested in discovering whether flow problems can be solved more efficiently by leveraging information from past examples. 
Seeding an algorithm from a non-trivial starting point is referred to as a \textbf{warm-start}. 

We are motivated by the question:
can one warm-start Ford-Fulkerson to improve theoretical and empirical performance?
Towards this goal, we leverage the recently developed algorithms with predictions framework 
(a.k.a learning-augmented algorithms). 
Research over the past several years has showcased the power of
augmenting an algorithm with a learned prediction, 
leading to improvements in caching \citep{Im0PP22, LindermayrM22, LykourisV21},
scheduling \citep{Im0QP21, LattanziLMV20}, clustering \citep{LMVWZ21}, 
matching \citep{ChenSVZ22, DinitzILMV21}, and more (see the survey by 
\citet{MitzenmacherV22}). 
An algorithm is \textbf{learning-augmented} if it can use
a prediction that relays information about the problem instance.
Most prior  work  uses
predictions to overcome uncertainty in  the online setting. 
However, recent work by \citet{DinitzILMV21}---and the follow-up work by \citet{ChenSVZ22}---instead 
focuses on improving the run-time of bipartite matching algorithms by predicting the dual variables and using these to warm-start the primal-dual algorithm.

Motivated by the idea of warm-starting combinatorial optimization algorithms, we seek to provide faster run-time guarantees for flow problems via warm-start. 
The paper will focus on flow problems generally, 
but will additionally showcase a common, practical use-case in 
computer vision: image segmentation. 
In the image segmentation problem, the input is an image containing an object/ foreground,
and the goal is to locate the foreground in the image.

\subsection{Our contributions}

For a graph $G=(V,E)$ equipped with a capacity vector $c \in \mathbb{Z}_{\geq 0}^{|E|}$, let $f$ be a flow on $G$, where $f_e$ is the flow value on each edge $e \in E$. Let $\mathcal{F}^*$ be the collection of all feasible, maximum flows on $G$. Given a potentially infeasible flow $\widehat{f}$, let 
$\eta(\widehat{f}) = \min_{f^* \in \mathcal{F}^*}  || \widehat{f} - f^* ||_1$. This term denotes how close $\widehat{f}$ is to being optimal.



\noindent \textbf{Warm-starting Ford-Fulkerson on general networks\quad} 
Our main contribution is Algorithm \ref{alg:EK-warmstart}, which can be used to warm-start any implementation of Ford-Fulkerson, i.e., Ford-Fulkerson with any specified subroutine for finding augmenting paths. Algorithm \ref{alg:EK-warmstart} takes as input a predicted flow $\widehat{f}$. Note $\widehat{f}$ may be infeasible for $G$, as predictions can be erroneous. Algorithm \ref{alg:EK-warmstart} first projects $\widehat{f}$ to a feasible flow for $G$, and then runs the Ford-Fulkerson procedure from the feasible state to find a maximum flow. 
While our warm-started Ford-Fulkerson has its performance tied to the quality of the prediction, it also enjoys the same worst-case run-time bounds as the vanilla Ford-Fulkerson procedure.

\begin{thm} \label{thm: run-time-flow-0}
Let $\widehat{f}$ be a potentially infeasible flow on network $G=(V,E)$. Let $T$ be the worst-case run-time for Ford-Fulkerson with a chosen augmenting path subroutine. Using the same subroutine, Algorithm \ref{alg:EK-warmstart} seeded with $\widehat{f}$ finds an optimal flow $f^*$ on $G$ within time $ O(\min \{ |E|\cdot \eta(\widehat{f}), T\})$.
\end{thm}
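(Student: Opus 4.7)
My plan is to establish the two bounds in the minimum separately, since Algorithm~\ref{alg:EK-warmstart} runs projection followed by Ford-Fulkerson and both bounds will apply to the same execution. For the $O(T)$ bound, observe that once projection yields a feasible flow $f'$, the remaining work is the chosen Ford-Fulkerson subroutine run on $G$, and starting from a feasible $f'$ can only reduce the number of augmenting iterations compared to starting from the zero flow. So long as the projection itself runs in $O(|E|)$ time (which is dominated by $T$ for any reasonable augmenting-path subroutine), the total cost is $O(T)$.

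For the $O(|E|\cdot \eta(\widehat{f}))$ bound, I would proceed in three steps. First, fix any minimizer $f^* \in \mathcal{F}^*$ with $\|f^*-\widehat{f}\|_1 = \eta(\widehat{f})$, and note that the total capacity violation of $\widehat{f}$, namely $\sum_e \max(0, \widehat{f}_e - c_e)$, is bounded by $\eta(\widehat{f})$ because $f^*$ respects the capacities; a similar accounting bounds the conservation violation of $\widehat{f}$. I would then argue that Algorithm~\ref{alg:EK-warmstart}'s projection step repairs both violations in $O(|E|)$ time and produces a feasible $f'$ with $\|f'-\widehat{f}\|_1 = O(\eta(\widehat{f}))$. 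Second, by the triangle inequality $\|f'-f^*\|_1 = O(\eta(\widehat{f}))$, and since the value of any flow equals its net flow across the cut $\{s\}$, this yields $v(f') \geq v^* - O(\eta(\widehat{f}))$. Third, since capacities (and hence augmenting-path throughput) are integral, Ford-Fulkerson from $f'$ terminates in at most $O(\eta(\widehat{f}))$ augmenting iterations, each of which costs $O(|E|)$ via a BFS or DFS on the residual graph. Taken together, this yields the $O(|E|\cdot \eta(\widehat{f}))$ run-time.

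The main obstacle I anticipate is the projection step itself. Since $\widehat{f}$ can simultaneously exceed capacities on some edges and violate conservation at some vertices, the projection must repair both forms of infeasibility at once without losing its $L_1$-contraction property with respect to every feasible flow. A natural strategy is to first cap each edge at $c_e$ (which weakly decreases $L_1$ distance to every capacity-respecting flow) and then to cancel cycles and reroute excess and deficit via residual-graph operations, each chosen to weakly preserve contraction. Making this precise---and verifying that the whole repair runs in $O(|E|)$ time---is where the bulk of the technical work lies; once it is in place, the value bound $v(f') \geq v^* - O(\eta(\widehat{f}))$ and the integrality argument plug in directly to yield the theorem.
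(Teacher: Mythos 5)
Your high-level decomposition (bound the two arms of the minimum separately; cap capacities first, then restore conservation, then run Ford--Fulkerson) matches the paper's, but there is a concrete error at the center of the argument: you assert twice that the feasibility projection runs in $O(|E|)$ time, and this is false. Restoring conservation requires repeatedly finding projection paths from excess nodes to deficit nodes (and to $s$ / from $t$) in the residual graph; each path-finding call already costs $O(|E|)$, and the number of such calls is only bounded by the total excess plus deficit, which can be as large as $\eta(\widehat{f})$. So the projection itself costs $O(|E|\cdot(\textsf{ex}_{\widehat{f}}+\textsf{def}_{\widehat{f}})) = O(|E|\cdot\eta(\widehat{f}))$ --- it is the dominant term in the first arm of the minimum, not a negligible preprocessing step. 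With your $O(|E|)$ claim the first bound would follow trivially, which should have been a warning sign. The correct accounting (as in the paper's Lemma~\ref{lem: ub-rt-terms}) charges three quantities to $\eta(\widehat{f})$: the capacity overflow $\sum_e\max\{\widehat{f}_e-c_e,0\}$, the total excess/deficit, and the residual value gap $|\nu(f)-\nu(f^*)|$; each contributes up to $\eta(\widehat{f})$ path computations at $O(|E|)$ apiece.

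The same error undermines your $O(T)$ arm: you cannot dismiss the projection as ``dominated by $T$'' on the grounds that it takes $O(|E|)$ time. The paper instead observes that each of the three rounds of projection-path finding is literally an execution of the chosen Ford--Fulkerson implementation on an auxiliary network (residual graph plus a super-source attached to the excess nodes and a super-sink attached to the deficit nodes) with $|V|+2$ vertices and $O(|E|)$ edges, so each round costs $O(T)$; Lemma~\ref{lem: order-aux} guarantees three rounds suffice. You would need some argument of this form. Two smaller points: your worry about an ``$L_1$-contraction property'' is unnecessary --- the proof only needs that the flow value drops by at most $O(\eta(\widehat{f}))$ during projection, which follows from directly charging the capped flow and the routed excess/deficit against $\eta(\widehat{f})$ --- and you never argue that projection paths actually exist whenever excess or deficit remains (the paper's Lemma~\ref{lem:proj-path-exists}, via a flow-balance identity on the set of vertices reachable from an excess node), which is needed for the projection to terminate at a feasible flow at all.
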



At various points, we specify two Ford-Fulkerson implementations: Edmonds-Karp and Dinic's algorithm, for which the run-time $T$ is $O(|E|^2|V|)$ and $O(|V|^2|E|)$, respectively.

One may wonder how to obtain such a $\widehat{f}$.
We prove that when the networks come from a fixed but unknown distribution,
one can PAC-learn the best approximation (Theorem \ref{thm: learning-PAC}) for optimal flows, which can be used to warm-start.


\noindent \textbf{Faster warm-start on locally-changed networks\quad} 
Next, we improve the analysis of Algorithm \ref{alg:EK-warmstart} for network instances with gradual, local transitions from one to another. 
We prove the local transitions among networks, informally characterized in Theorem \ref{thm: instance-robust-0},
give rise to many short paths along which we can send flow, thus improving the run-time.

\begin{thm}[Informal, formally Theorem \ref{thm:instance-robustness-formal}]
\label{thm: instance-robust-0}
Fix separable networks $G^1$ and $G^2$, 
where the transition between 
them is $d$-local.
For $\widehat{f}$ an optimal flow on $G^{1}$, 
the run-time of Algorithm \ref{alg:EK-warmstart} seeded with $\widehat{f}$ on 
$G^{2}$ to find optimal $f^*$ on $G_2$ is
$O( d \cdot |E| +d^2\cdot \eta(\widehat{f}))$.
\end{thm}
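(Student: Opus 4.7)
The strategy is to exploit $d$-locality in two ways: to bound the length of each augmenting path the warm-started algorithm must find, and to bound how much graph exploration each such search requires.

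First I would analyze the projection step of Algorithm \ref{alg:EK-warmstart} on the input $\widehat{f}$. Since $\widehat{f}$ is feasible (in fact optimal) on $G^1$ and $G^2$ differs from $G^1$ only inside a $d$-local region $R$, all capacity violations of $\widehat{f}$ on $G^2$ lie on edges incident to $R$. The projection can therefore cancel each violation by pushing flow along a short local reroute; separability should ensure that each such reroute has length $O(d)$ and that the total cancelled volume over all violations is $O(|E|)$. Accounting for the BFS/DFS needed to identify these local reroutes gives the $O(d \cdot |E|)$ term, and leaves a feasible flow $f'$ on $G^2$ with $\|f' - f^*\|_1 = O(\eta(\widehat{f}))$ for some $f^* \in \mathcal{F}^*$, by triangle inequality against $\widehat{f}$.

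Next I would analyze the augmenting phase starting from $f'$. The key structural claim is that the difference $f^* - f'$ admits a flow decomposition in the residual graph $G^2_{f'}$ whose augmenting paths all have length $O(d)$: outside the $O(d)$-neighborhood of $R$, both $f'$ and $f^*$ agree with a feasible flow on $G^1$, so any long portion of a candidate augmenting path can be shortcut via separability and flow conservation. Consequently, the augmenting-path subroutine can be restricted to a BFS of radius $O(d)$ around $R$, which touches $O(d^2)$ edges. Under integer capacities at most $O(\eta(\widehat{f}))$ augmentations are required, yielding $O(d^2 \cdot \eta(\widehat{f}))$ for the augmentation phase. Summing the two phases gives the theorem.

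The main obstacle is the structural claim that allows the BFS to be truncated at radius $O(d)$. Formally establishing that augmenting paths in $G^2_{f'}$ can be chosen to stay within the $d$-neighborhood of $R$ — and that the chosen augmenting-path subroutine in Algorithm \ref{alg:EK-warmstart} actually discovers these local paths rather than longer alternatives — will rely on the precise definitions of \emph{separable} and \emph{$d$-local}. I expect that the core technical lemma will be a flow-decomposition result that uses separability to split $f^* - f'$ into a ``near-$R$'' part supported on paths of length $O(d)$ and a ``far-from-$R$'' part that is a sum of cycles and hence contributes nothing to augmentation; once this is in hand, the run-time accounting above is routine.
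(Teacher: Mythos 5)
There is a genuine gap in your plan, and it sits exactly where you flagged the ``main obstacle'': the claim that the augmenting phase can be confined to an $O(d)$-radius BFS around the changed region $R$, so that each augmentation costs $O(d^2)$ and there are $O(\eta(\widehat{f}))$ of them. This fails because the \emph{value} of the maximum flow genuinely changes between $G^1$ and $G^2$: the min cut of $G^\ell$ is the set of unit-capacity boundary edges $E_\ell$, so the optimal value shifts by up to $\bigl||E_2|-|E_1|\bigr| = O(d^2)$. After feasibility is restored, the remaining work is not a sum of cycles plus short near-$R$ paths --- it necessarily contains genuine $s$--$t$ (or excess-to-$s$, $t$-to-deficit) paths, and $s$ and $t$ attach to the grid only at designated nodes ($y$ with $(s,y)\in E$, $x$ with $(x,t)\in E$) that need not lie anywhere near $R$. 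Those paths have length $\Theta(|E|)$ in the worst case, so your proposed flow decomposition into ``near-$R$ paths'' and ``far cycles'' cannot exist whenever the flow value changes. The paper's proof splits the work differently: the value-\emph{preserving} part (rerouting excess to deficit within $V_2$ and within $W_2$, and symmetrizing across the boundary) uses paths of length $O(d)$ and is charged against the total excess/deficit, which is at most $\eta(\widehat{f})$, giving $O(d\cdot\eta(\widehat{f}))$; the value-\emph{changing} part uses long paths through $s$ and $t$, each costing $O(|E|)$, but there are only $O(d^2)$ of them because the flow value changes by at most $\bigl||E_2|-|E_1|\bigr|\le O(d^2)$, giving $O(d^2\cdot|E|)$.

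Two further points. First, your accounting of the projection phase as $O(d\cdot|E|)$ with ``total cancelled volume $O(|E|)$'' is not the right charge: the quantity to charge against is the total excess/deficit created by rounding down, which Lemma \ref{lem: ub-rt-terms} bounds by $\eta(\widehat{f})$, not $|E|$. Second, note that the bound you are aiming for, $O(d\cdot|E| + d^2\cdot\eta(\widehat{f}))$, matches the informal statement but not the formal Theorem \ref{thm:instance-robustness-formal}, which (consistently with its proof) reads $O(d^2\cdot|E| + d\cdot\eta(\widehat{f}))$; the two statements have the roles of $d$ and $d^2$ swapped, and the proof supports the formal version.
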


\noindent \textbf{Empirical results\quad}  
Motivated by our theoretical results, we use 
our warm-started Ford-Fulkerson procedure on networks derived from instances of \textit{image segmentation} on sequences of photos taken of a moving object or from changing angles. 
We show that warm-start is faster than standard Ford-Fulkerson procedures (2-$5\times$ running time improvements), thus demonstrating that our theory is predictive of practical performance.  
A key piece of the speed gain of warm-start comes not from sending the flow along fewer paths, but rather from using shorter paths to project $\widehat{f}$ to a feasible flow, 
as predicted by Theorem \ref{thm: instance-robust-0}.
We note that the goal of our experiments is not necessarily to provide state-of-the-art algorithms for image segmentation, but instead to show that
warm-starting Ford-Fulkerson leads to substantial run-time improvements on practical networks as compared to running Ford-Fulkerson from scratch.

\subsection{Related work}

Flow problems have been well studied.  See the survey by \citet{ahuja1993network}.  
The Ford-Fulkerson method greedily computes a maximum flow by iteratively using an augmenting-path finding subroutine \citep{ford_fulkerson_1956}. Different subroutines give rise to different implementations such as Edmonds-Karp (using BFS) \citep{EmondsK} and the even faster Dinic's algorithm \citep{dinitz2006dinitz}. 
\citet{Sherman13} and \citet{KelnerLOS14} give fast algorithms that compute approximate maximum flows. \citet{KLPPS22} gave a nearly-linear time max flow algorithm.

Similar in spirit to our work, 
\citet{AltnerE08} demonstrate empirically that one can warm-start the push-relabel algorithm
on  similar networks.
Additionally, we are aware of concurrent work on
a warm-started max flow algorithm by \citet{PolakZub}. Importantly, they require an additional assumption that the predicted flow satisfy flow conservation constraints, a limitation that the authors highlight. In contrast, we have an explicit feasibility restoration step, allowing us to get rid of this assumption.

Learning-augmented algorithms
have become popular recently.  
The area was jump started by \citet{KraskaBCDP18},
who showed results on learned data structures.  
The area has since become popular for the design of online algorithms 
where the algorithm uses predictions to cope with uncertainty \citep{PurohitSK18,LattanziLMV20,AntoniadisGKK20}.
For the reader less familiar with this literature, 
we recommend reading the paragraph \emph{Learning-augmented algorithms} in Section \ref{sec: prelims}.

While \citet{KraskaBCDP18} showed that running times can be improved using predictions,
this is still yet to be well-understood theoretically.  
The work of ~\citet{DinitzILMV21} showed how to improve the run-time of the Hungarian algorithm for weighted bipartite matching. 
~\cite{ChenSVZ22} has extended this to other graph problems.  
Both of these works warm-start
primal dual algorithms with a predicted dual solution.
Other run-time improvements have been made using predicitions too, 
as \citet{SakaueO22} gave algorithms for faster discrete convex analysis, and
\citet{LuRSZ21} showed predictions can be used to improve the run-time of generalized sorting. 
A closely related area is that of data-driven algorithm design \citep{GuptaR17,BalcanDDKSV21}.

\subsection{Organization and preliminaries}
\label{sec: prelims}

Section \ref{sec: FF} presents the warm-start  algorithm, proves its correctness, and provides run-time guarantees. In Section \ref{sec: instance-robust}, 
we give additional theoretical guarantees 
if the networks are from a specific subclass. 
In Section \ref{sec: empirical}, we show our empirical results on networks arising from image segmentation, which are closely related to the networks in Section \ref{sec: instance-robust}. Additional empirical evidence can be found in Appendix \ref{sec: appendix} and more at \url{https://github.com/wang-yuyan/warmstart-graphcut-algorithms-pulic}.

\noindent \textbf{Learning-augmented algorithms\quad}
In this model, an algorithm is given access to a predicted parameter.  The prediction can be erroneous and  must come equipped with an error metric.
In our setting, for a given network $G$, we will predict an integral flow $\widehat{f}$. 
This predicted flow may be infeasible for $G$. 
Recall that for $\mathcal{F}^*$ the set of optimal flows on $G$, 
we define the error of $\widehat{f}$ on $G$ to be
$\eta(\widehat{f}) = \min_{f^* \in \mathcal{F}^*}||\widehat{f}-f^* ||_1$.

It is well-established in the literature on learning-augmented algorithms that the 
desired properties of the prediction and algorithm
are \emph{learnability, consistency, and robustness}.
We show that given an additional assumption on the uniqueness of optimal flows (see Assumption \ref{asmp: unique}), 
predicted flows are PAC-learnable in Theorem \ref{thm: learning-PAC}.
Additionally, we use the \textit{instance-robustness} 
\citep{LavastidaM0X21} of flows 
to justify learnability for special networks in Theorem \ref{thm: instance-robust-0} 
and observe this empirically, as well.
If we are given a predicted flow $\widehat{f}$ for a 
network $G$ that is actually an optimal flow, then the run-time is 0, 
so Algorithm \ref{alg:EK-warmstart} is consistent.
We are also guaranteed robustness and a worst-case guarantee, 
as the run-time of Algorithm \ref{alg:EK-warmstart} is 
$O(\min\{|E|\eta(\widehat{f}),T\})$, 
which degrades smoothly as a function of how far $\widehat{f}$ is far from an optimal flow $f^*$ on a network,
but the worst-case is still bounded by $O(T)$\footnote{Recall, $T$ depends on the Ford-Fulkerson implementation.}.

\noindent \textbf{Network flow}\quad
Let $G = (V , E)$ be a fixed network with $|V|=n$ and $|E|=m$.
The source $s$ and sink $t$ are part of the vertex set $V$.
The network is equipped with a capacity vector $c \in \mathbb{Z}_{\geq 0}^{m}$. 
A flow $f \in \mathbb{Z}_{\geq 0}^m$ on $G$ is feasible if it satisfies flow conservation,
i.e. for all vertices $u \in V \setminus \{s,t\}$ the incoming flow for $u$ equals the outgoing flow 
$\sum_{e = (v,u)}f_e = \sum_{e = (u,w)}f_e$, and capacity constraints,
i.e.  $f_e \leq c_e$ for all edges $e \in E$. Throughout the paper we refer to a flow satisfying these constraints as \textbf{feasible}.

Given a flow $f$ that satisfies capacity constraints but not necessarily flow conservation,
the \textbf{residual graph} $G_f$ is the network on the same set of vertices,
and for any edge $e=(u,v) \in E$, add edge $e$ to $G_f$ but with capacity $c'_e=c_e-f_e$ 
and a reversed edge $\cev{e}=(v,u)$ with capacity $f_e$.
Let $\nu(f)$ be the amount of flow that $f$ sends from the source, $\nu(f) = \sum_{e =(s,u)} f_e$. 
An \textbf{augmenting path} in $G_f$ is a path $p$ from $s$ to $t$ where every edge $e \in p$ has $c'_e>0$.

When $f$ does not satisfy flow conservation, the total incoming and outgoing flow values on a node are different. Call this difference the \textbf{excess}/\textbf{deficit} ($\textsf{ex}_f/ \textsf{def}_f$) of the node if the incoming is more/less than outgoing flow, respectively.
For shorthand, we let the total excess and deficit in $G$ according to flow $f$ be
$\textsf{ex}_f = \sum_{u \not \in \{s,t\}} \textsf{ex}_{f}(u)$
and $\textsf{def}_f = \sum_{u \not \in \{s,t\}} \textsf{def}_{f}(u)$;
note that this excludes the source and sink.
Let $A_f, B_f \subseteq V$ be the nodes with positive excess/deficit with respect to $f$ (so $t \in A_f$, $s \in B_f$), respectively. It will be convenient to further refer to these sets excluding $s, t$ with
$A'_{f} = A_{f} \setminus \{t\}$ and $B'_{f} = B_{f} \setminus \{s\}$.

\section{Warm-start Ford-Fulkerson}
\label{sec: FF}
Here, we give our algorithm for using predicted flows.  
The next proposition holds from the following observations. 
Any Ford-Fulkerson method (e.g., Edmonds-Karp or Dinic's) 
can be seeded with any feasible flow, 
since running Ford-Fulkerson seeded with a feasible flow is the same as running it from scratch on the residual graph. Each iteration of Ford-Fulkerson increases the value of the flow and takes $O(|E|)$ time to find an augmenting path and send flow.
\begin{proposition}
\label{prop: feasibile-ws}
    Let $f$ be a feasible flow on $G$, 
    where $\nu(f) < \nu(f^*)$ for $f^*$ an optimal flow on $G$. 
    Ford-Fulkerson seeded with $f$ terminates in at most $\nu(f^*)-\nu(f)$ many iterations, 
    so its run-time is $O(|E| \cdot (\nu(f^*)-\nu(f)))$.
\end{proposition}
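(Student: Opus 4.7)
The plan is to reduce everything to the standard Ford-Fulkerson termination argument applied to the residual graph $G_f$. First I would verify the equivalence implicit in the paragraph preceding the proposition: running Ford-Fulkerson seeded with a feasible flow $f$ is operationally the same as running Ford-Fulkerson from scratch on $G_f$. This holds because the method only interacts with the current residual graph when searching for an augmenting path, and by a short induction on the number of augmentations performed, the residual graph after $k$ seeded augmentations starting from $f$ on $G$ equals the residual graph after $k$ augmentations starting from the zero flow on $G_f$. In particular, seeded Ford-Fulkerson produces a maximum flow on $G$ if and only if its run on $G_f$ terminates with a maximum flow on $G_f$.

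Next I would argue that the max flow value on $G_f$ equals exactly $\nu(f^*)-\nu(f)$. Any feasible flow $f'$ on $G_f$ lifts to a feasible flow on $G$ of value $\nu(f)+\nu(f')$, so the max flow on $G_f$ has value at most $\nu(f^*)-\nu(f)$. For the matching lower bound, the standard residual decomposition of $f^*-f$ (routing the positive component on forward edges and the negative component on reverse edges of $G_f$) produces a feasible flow on $G_f$ of value exactly $\nu(f^*)-\nu(f)$. Hence the two are equal.

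Combining these, the iteration count becomes routine. Since $c\in\mathbb{Z}_{\geq 0}^{m}$ and $f\in\mathbb{Z}_{\geq 0}^{m}$, every residual capacity is a non-negative integer, so any augmenting path in $G_f$ has bottleneck value at least $1$ and each iteration raises $\nu(\cdot)$ by at least $1$. By the previous paragraph, at most $\nu(f^*)-\nu(f)$ iterations can occur. Since each iteration takes $O(|E|)$ time to find an augmenting path (BFS/DFS) and push flow along it, the total run-time is $O(|E|\cdot(\nu(f^*)-\nu(f)))$. The only step warranting care is the equivalence between the seeded version and the fresh run on $G_f$; once that bookkeeping is pinned down, the rest is a textbook instance of the integer-capacity Ford-Fulkerson argument.
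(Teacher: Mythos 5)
Your argument is correct and follows the same route the paper takes: the paper justifies the proposition in the two sentences preceding it (seeded Ford--Fulkerson is equivalent to a fresh run on the residual graph, and each iteration raises the integral flow value by at least one at a cost of $O(|E|)$). You simply make explicit the details the paper leaves implicit, namely the induction establishing the residual-graph equivalence and the decomposition of $f^*-f$ showing the residual max-flow value is exactly $\nu(f^*)-\nu(f)$.
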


Let $\widehat{f}$ be a predicted flow for network $G$.
It may be infeasible, that is, it can violate capacity or flow conservation constraints. 
Algorithm \ref{alg:EK-warmstart} has two primary steps: 
step (1) projects $\widehat{f}$ to a feasible flow---we call this the \textbf{feasibility projection}---and step (2) runs a Ford-Fulkerson method seeded with a feasible flow and finds an optimal flow.
During feasibility projection, we first round down the flow wherever capacity constraints are violated. Then we send flow along \textbf{projection paths}, that is, a path from excess nodes to deficit nodes where all capacities are positive in the residual graph, to get flow conservation.

\iftrue
\begin{algorithm}[t] \label{alg:ws}
\caption{Warm-starting Ford-Fulkerson with $\widehat{f}$}\label{alg:EK-warmstart}
\begin{algorithmic}

\STATE {\bfseries Input}: predicted flow $\widehat{f}$
\WHILE{ $\exists$ edge $e$ in $G$ with $\widehat{f}_e > c_e$}
\STATE Update $\widehat{f}$ to $\widehat{f}_e \leftarrow c_e$
\ENDWHILE
\STATE Set $f \leftarrow \widehat{f}$
\STATE Build the residual graph $G_{f},$ as well as $A'_f$ and $B'_f$, the sets of nodes with excess/deficit to round
\STATE \hspace{-4mm}  // \emph{Main \textsf{while} loop, feasibility projection}
\WHILE{ $ |A'_f \cup B'_f| > 0$ }
\IF{$|A'_f| > 0$}
\IF{$\exists$ projection path from $u \in A'_f$ to $v \in B'_f$}
\STATE Let $p$ be the path from $u$ to $v$
\ELSE
\STATE Let $p$ be a path from $u \in A'_f$ to $s$
\ENDIF
\ELSE 
\STATE Let $v \in B'_f$, let $p$ be a path from $t$ to $v$ 
\STATE \hspace{-4mm}  // \emph{See the text for more details on path-finding}
\ENDIF
\STATE Let $w, w'$ be the beginning and ending nodes of $p$, respectively
\STATE Send $\mu_p = \min\{\textsf{ex}_{f}(w), \textsf{def}_{f}(w'), \min_{e \in p}c'_e\}$ units of flow down path $p$ 
\State Update $f$, $G_f,$ $A'_f$, and $B'_f$
\ENDWHILE
\STATE \hspace{-4mm}  // \emph{Feasibility projection ends, optimization starts}
\STATE Run Ford-Fulkerson on $G$ seeded with $f$ until optimality
\STATE {\bfseries Output}: $f^*$
\end{algorithmic}
\end{algorithm}
\fi

Let $f$ be an infeasible flow that satisfies capacity constraints and is integral. 
In the main \textsf{WHILE} loop of Algorithm \ref{alg:EK-warmstart}, projection paths are found in three rounds: $A'_f - B'_f$, $A'_f - s$, $B'_f - t$. Within each rounds, we find the projection paths by constructing an auxiliary graph $G'$ and applying on this graph the chosen augmenting path subroutine (e.g., BFS) in Ford-Fulkerson. 
To build $G'$ for finding all possible $A'_f - B'_f$ projection paths, take the residual graph $G_f$ w.r.t $f$ and treat it as a new network. Add a super source $s^*$ and super sink $t^*$. Add arcs $(s^*,u)$ to every $u \in A_f'$ (non-sink excess nodes) with capacity $\textsf{ex}_f(u)$, and $(u,t^*)$ from every $u \in B_f'$ (non-source deficit nodes) to $t^*$. Initialize all flows to be $0$ on this network. An $s^* - t^*$ augmenting path in $G'$ corresponds to a $A'_f - B'_f$
projection path in $G_f$. This is because for any projection path from $u$ to $u'$ one can let there be flow on $s^*$ to $u$ and $u'$ to $t^*$, thus making it an augmenting path in $G'$, and the reverse procedure holds. The $G'$ graphs for the other two rounds are built similarly; for finding $A'_f - s$ projection paths, add arcs $(s^*, u)$ to $u \in A_f'$ and $(s, t^*)$; for $t - B'_f$, add arcs $(s^*, t)$ and $(v, t^*)$ to every $v \in B_f'$.

To see that in the main \textsf{WHILE} loop of Algorithm \ref{alg:EK-warmstart}, the projection paths can be found from $A'_f - B'_f$, then from $A'_f - s$, 
and lastly from $B'_f - t$, we use the following lemma.
\begin{lemma}\label{lem: order-aux}
Fix an infeasible flow $f$.
If there is no path from $A_f'$ to $B_f'$ with positive capacity
in $G_f$, 
then sending flow from $A_f'$ to $s$ 
(or from $t$ to $B_f'$) to form the flow $h$
will not result in a path from $A_h'$ to $B_h'$ with positive capacity in $G_h$.
\end{lemma}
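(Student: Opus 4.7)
The plan is to identify the subset $S \subseteq V$ consisting of all nodes reachable from $A_f'$ in $G_f$ along edges of strictly positive residual capacity, and to show that $S$ remains a reachability barrier after the augmentation, in the sense that starting from $A_h'$ in $G_h$ one cannot escape $S$. Since by hypothesis $S \cap B_f' = \emptyset$ and the augmentation will not enlarge the deficit set off of $s$, this will suffice to rule out an $A_h'$-to-$B_h'$ positive-capacity path in $G_h$.

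First I would handle the excess/deficit bookkeeping. Sending $\mu$ units along the path $p$ from $u \in A_f'$ to $s$ uses only residual edges of $G_f$, so by the definition of $S$ every node on $p$, including the endpoint $s$, lies in $S$. Every intermediate node has its net incoming-minus-outgoing flow preserved, so its excess/deficit is unchanged; only the endpoints see a change, and of those, $s$ is excluded from both $A_h'$ and $B_h'$ by definition. Consequently $A_h' \subseteq A_f' \subseteq S$ and $B_h' = B_f'$, which is disjoint from $S$.

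Next I would compare $G_h$ to $G_f$. The only edges whose residual capacity strictly increases after augmenting along $p$ are the reverses of the edges traversed by $p$, and the endpoints of every such reverse edge are two consecutive nodes of $p$, hence in $S$. Every other positive-capacity edge of $G_h$ is also a positive-capacity edge of $G_f$. Walking in $G_h$ from a node in $S$, each step either uses an edge already available in $G_f$---which by definition of $S$ lands in $S$---or uses one of the new reverse edges, which also lands in $S$. Inductively, the set reachable from $A_h'$ in $G_h$ is contained in $S$, hence disjoint from $B_h' = B_f'$. The symmetric case of sending flow from $t$ to some $v \in B_f'$ is handled by instead defining $S$ as the set of nodes that can reach $B_f'$ in $G_f$ along positive-residual edges, and running the same argument in reverse.

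The main obstacle is keeping the new-edge bookkeeping honest when $p$ itself traverses reverse residual edges; the point to emphasize is that augmenting along any single residual edge $(a,b)$ alters only the residual capacities of $(a,b)$ and $(b,a)$, both of which stay between two $S$-nodes, so no new escape route from $S$ is ever created.
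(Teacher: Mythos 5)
Your proof is correct, but it takes a genuinely different route from the paper's. The paper argues by contradiction via path splicing: if a positive-capacity $A_f'$--$B_f'$ path $p_2$ appears in $G_h$ that was absent in $G_f$, then some edge of $p_2$ is the reverse of an edge of the augmenting path $p_1$, and concatenating the prefix of $p_1$ up to that point with the suffix of $p_2$ yields a positive-capacity path from $A_f'$ to $B_f'$ already in $G_f$, a contradiction. You instead establish a cut-style invariant: the set $S$ of nodes reachable from $A_f'$ along positive-residual edges of $G_f$ is disjoint from $B_f'$ by hypothesis, contains every node of the augmenting path (including $s$), and remains closed under the positive-residual edges of $G_h$, since the only edges whose residual capacity increases are reverses of edges of $p$ and hence have both endpoints in $S$; combined with $A_h' \subseteq A_f' \subseteq S$ and $B_h' = B_f'$, this rules out any new $A_h'$--$B_h'$ path. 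Your barrier argument is arguably cleaner: it sidesteps the paper's slightly delicate choice of ``the last such edge'' (and the implicit step that the spliced walk contains a simple path), and because $S$ is preserved it iterates immediately over a whole sequence of $A'$--$s$ augmentations rather than requiring a separate induction. The paper's splicing argument is more local and self-contained, needing no auxiliary set, but both are valid proofs of the lemma.
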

\begin{proof}
Assume for sake of deriving a contradiction
that (without loss of generality) sending flow from
some $u \in A_f'$ to $s$ formed flow $h$, for
which there exists a path from $A_f'$ to $B_f'$ with positive capacity in $G_h$.
Let $p_1 = (u_1, \ldots,s)$ be the path along which flow was sent in $G_f$, 
and let $p_2 = (u_2, \ldots,v)$ be the path with positive capacity in $G_h$. 
Note that $u_1,u_2 \in A_f'$ and $v \in B_f'$ by the assumptions.
Since $p_2$ went from having 0 capacity in $G_f$ to positive capacity in $G_h$, at least one edge $e=(a,b)$ of $p_2$ has $\cev{e}=(b,a)$ in $p_1$.
If there are multiple such edges take $e$ to be the last such edge in $p_1.$
Let $p_1' = (u_1,\ldots,a)$ be the truncation of $p_1$ at $a$
and let $p_2' = (a,\ldots,v)$ be the suffix of $p_2$ that begins at $a$.
Then let $p'$ be the concatenation of $p_1'$ and $p_2'$. 
Note that $p_1'$ and $p_2'$ both have positive capacity in $G_f$. 
Thus $p'$ is a path in $G_f$ with positive capacity from $u$ to $v$, 
which is a contradiction.
\end{proof}

\subsection{Warm-start algorithm analysis}
\label{sec:ws-analysis}
In this section we analyze how Algorithm \ref{alg:EK-warmstart} works.

\noindent \textbf{Validity of algorithm}\quad
We prove that a projection path must exist in the main \textsf{WHILE} loop.
\begin{lemma}
\label{lem:proj-path-exists}
Given infeasible flow $f$, $\forall u \in A'_f$, $\exists v \in B_f$ such that there is a projection path from $u$ to $v$; $\forall v \in B'_f$, $\exists u \in A_f$ such that there is a projection path from $u$ to $v$.
\end{lemma}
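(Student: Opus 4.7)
The plan is to prove each statement via a cut-based argument in the residual graph $G_f$, adapting the classical max-flow/min-cut reachability style reasoning to the setting where $f$ only satisfies capacity constraints. The two statements are symmetric, so I would handle the first in detail and then invoke symmetry for the second.

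For the first statement, fix $u \in A'_f$ and let $S \subseteq V$ be the set of nodes reachable from $u$ in $G_f$ through edges of positive residual capacity. Our goal is to show $S \cap B_f \neq \emptyset$, so suppose for contradiction that $S \cap B_f = \emptyset$; in particular $s \notin S$ and no strict-deficit node lies in $S$. By definition of reachability, no residual arc leaves $S$ with positive capacity. Translating this back to $G$, every original forward edge $(a,b)$ with $a \in S$, $b \notin S$ must be saturated ($f_{(a,b)} = c_{(a,b)}$), and every original edge $(b,a)$ with $b \notin S$, $a \in S$ must carry zero flow (otherwise its reverse residual arc would leave $S$).

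Now I would compute $\sum_{w \in S}(\textsf{in}_f(w) - \textsf{out}_f(w))$ in two ways. On the one hand, edges internal to $S$ cancel, so this sum equals the net flow entering $S$ from $V \setminus S$, which by the previous paragraph is $0 - \sum_{a \in S,\, b \notin S} c_{(a,b)} \leq 0$. On the other hand, no node of $S$ is in $B_f$, so every term $\textsf{in}_f(w) - \textsf{out}_f(w)$ is nonnegative, and the term for $u$ is strictly positive since $u \in A'_f$. Thus the sum is strictly positive, a contradiction. Hence $S$ contains some $v \in B_f$, and the $u$-to-$v$ path in $G_f$ witnessing $v \in S$ is the desired projection path.

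For the second statement, I would take $T \subseteq V$ to be the set of nodes that can reach a fixed $v \in B'_f$ in $G_f$ along positive-capacity edges, and argue symmetrically: if $T \cap A_f = \emptyset$ then no residual arc enters $T$ from $V \setminus T$, forcing every original edge from $V \setminus T$ into $T$ to be saturated and every original edge from $T$ out to $V \setminus T$ to carry zero flow. Summing nets over $T$ gives a nonnegative quantity (the capacity of the cut into $T$) on one side and a strictly negative quantity on the other (all nodes of $T$ have nonpositive net and $v$ has strictly negative net), yielding a contradiction.

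The only real subtlety, and the step I would write most carefully, is getting the residual-to-original edge translation right in both directions (forward-saturated versus reverse-empty), since the residual graph contains both an arc and its reverse for every original arc. Once that bookkeeping is pinned down, the contradiction is just the standard cut identity applied to a potentially non-conservative flow.
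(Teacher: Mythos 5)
Your proof is correct and follows essentially the same route as the paper's: you take $S$ to be the residual-reachability set from $u$, observe that the absence of deficit nodes in $S$ makes the node-wise sum of net inflows strictly positive while the zero incoming flow across the cut (forced by reverse residual arcs) makes it nonpositive, which is exactly the paper's argument via its Proposition on excess/deficit versus cut flow. The only cosmetic difference is that you derive the cut identity inline rather than citing it as a standalone proposition.
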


This lemma results from the following observation that links the summation of excess/deficit to the difference between in-flows and out-flows for any fixed set of nodes.
\begin{proposition}
\label{prop: ex-equals-def}
Let $f$ be a flow satisfying the capacity constraints of a network $G$.
Then for any $S \subseteq V$, the difference between the total deficits in $S$ and the total excesses in $S$ is exactly the difference between the total flow out of $S$ and into $S$. Formally,
\[
\sum_{u \in S} \textsf{def}_f(u) -\sum_{u \in S} \textsf{ex}_f(u)
=\sum_{u \in S}\sum_{\underset{v \not \in S}{e=(u,v): }} f_{e} -
\sum_{u \in S}\sum_{\underset{v \not \in S}{e=(v,u):}} f_{e}.
\]
\end{proposition}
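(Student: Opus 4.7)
The plan is a direct bookkeeping argument: rewrite the left-hand side as a single signed net-flow quantity at each node, then exploit cancellation on edges internal to $S$. First I would observe that, regardless of whether a given node $u$ has excess, deficit, or is balanced, the definitions give the pointwise identity
\[
\textsf{def}_f(u) - \textsf{ex}_f(u) \;=\; \sum_{e=(u,w) \in E} f_e \;-\; \sum_{e=(v,u) \in E} f_e,
\]
since at most one of $\textsf{def}_f(u), \textsf{ex}_f(u)$ is nonzero, and together they record exactly the signed difference between outflow and inflow at $u$. (At nodes where flow is conserved, both sides are zero, and for $s,t$ the convention that these quantities are the positive parts of outflow minus inflow keeps the identity intact.)

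Next I would sum this pointwise identity over $u \in S$ and interchange the order of summation over vertices and incident edges. Consider an edge $e=(u,v)$ with both $u,v \in S$: it contributes $+f_e$ to the outflow term at $u$ and $-f_e$ to the inflow term at $v$, so these two interior contributions cancel exactly. What remains after cancellation is precisely the set of edges with exactly one endpoint in $S$: each edge leaving $S$ contributes a single $+f_e$ and each edge entering $S$ contributes a single $-f_e$, yielding
\[
\sum_{u \in S}\bigl(\textsf{def}_f(u) - \textsf{ex}_f(u)\bigr) = \sum_{u \in S}\sum_{\substack{e=(u,v):\\ v \notin S}} f_e \;-\; \sum_{u \in S}\sum_{\substack{e=(v,u):\\ v \notin S}} f_e,
\]
which is exactly the claimed identity.

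There is no real obstacle here; the proposition is essentially a discrete divergence theorem on the network. The only step requiring any care is verifying the pointwise identity uniformly across balanced nodes, excess nodes, and deficit nodes (so that the sum really does telescope correctly over $S$), and checking that the double-sum rearrangement pairs each interior edge with itself exactly once with opposite signs.
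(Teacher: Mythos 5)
Your proof is correct and follows essentially the same route as the paper's (much terser) argument: both reduce the claim to the pointwise identity $\textsf{def}_f(u)-\textsf{ex}_f(u)=\text{outflow}(u)-\text{inflow}(u)$, sum over $S$, and observe that edges internal to $S$ cancel while boundary edges survive with the correct sign. Your write-up is just a more explicit version of the paper's one-sentence counting argument, with the added (and worthwhile) care about balanced nodes and the source/sink convention.
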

\begin{proof} In 
$\sum_{u \in S} \textsf{def}_f(u)-\sum_{u \in S} \textsf{ex}_f(u)$, 
the edges with flow within $S$ are counted with a positive and negative sign, 
but the edges carrying flow into $S$ or out of $S$ are counted 
once by the excess and once by the deficit, respectively.
\end{proof}

\begin{proof}[Proof of Lemma \ref{lem:proj-path-exists}]
We prove one direction by contradiction. The proof for the other direction is similar. For any $u \in A'_f$, assume no such path exists. In Proposition \ref{prop: ex-equals-def} take $S$
to be the set of all vertices reachable from $u$ in $G_f$. None of the nodes in $S$ can have positive deficit, so the LHS of Proposition \ref{prop: ex-equals-def} must be negative. On the other hand, $S$ must have 0 flow incoming to it, otherwise there is an edge pointing from $S$ to $V \setminus S$ in $G_f$, producing a vertex in $V \setminus S$ reachable from $u$ and contradicting the maximality of $S$.
Therefore, the RHS in Proposition \ref{prop: ex-equals-def} is non-negative, contradicting the equation.
\end{proof}

Note each iteration decreases the total amount of excess and deficit in the system,
$\textsf{ex}_f +\textsf{def}_f$, by at least one, so the \textsf{WHILE} loop terminates after restoring flow conservation, giving rise to a feasible flow. Then the vanilla Ford-Fulkerson takes over until an optimal solution is found.

\noindent \textbf{Running-time analysis}\quad
Here we work towards proving Theorem \ref{thm: run-time-flow-0}. 
We show the running time is tied to the quality of prediction, $\eta(\widehat{f}) = \min_{f^* \in \mathcal{F}^*}||\widehat{f} - f^*||_1$. We first bound the times the path-finding subroutine is called. For projection paths, the total excess and deficit could increase by at most $\sum_e \max\{\widehat{f}_e-c_e,0\}$ when Algorithm \ref{alg:EK-warmstart} rounds down the flow where it exceeds capacity. Thus it takes at most $(\textsf{ex}_{\widehat{f}}+\textsf{def}_{\widehat{f}})$ projection paths to restore feasibility. For augmenting paths, the difference in flow value, $\nu(\widehat{f}) - \nu(f^*)$, could decrease by at most $\sum_e \max\{\widehat{f}_e-c_e,0\}$ during the round-down and another $\textsf{ex}_{\widehat{f}}+\textsf{def}_{\widehat{f}}$ during feasibility projection, thus the total number is at most the summation of the three. Each path-finding takes $O(|E|)$ time. This combined with the next lemma proves Theorem \ref{thm: run-time-flow-0}.
\begin{lemma}\label{lem: ub-rt-terms}
$\nu(f^*)-\nu(f),$ $\textsf{ex}_{\widehat{f}}+\textsf{def}_{\widehat{f}}$, 
and $\sum_e \max\{\widehat{f}_e-c_e,0\}$ are upper bounded by $\eta(\widehat{f})$.
\end{lemma}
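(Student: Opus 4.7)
\medskip\noindent\textbf{Proof plan.} The plan is to fix a particular $f^{*}\in\arg\min_{g\in\mathcal F^{*}}\|\widehat f - g\|_{1}$, so that $\eta(\widehat f)=\sum_{e\in E}|\widehat f_e - f^{*}_e|$, and bound each of the three quantities on the left-hand side by this sum. Two of the three bounds are essentially pointwise inequalities in $|\widehat f_e - f^{*}_e|$; the only substantive step is the excess/deficit bound, for which I would use the flow conservation of $f^{*}$.

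First I would handle the capacity-overshoot term. Because $f^{*}$ is feasible we have $f^{*}_e\le c_e$ for every edge, so $\max\{\widehat f_e - c_e,0\}\le \max\{\widehat f_e - f^{*}_e,0\}\le |\widehat f_e - f^{*}_e|$, and summing over $e$ gives $\sum_e\max\{\widehat f_e-c_e,0\}\le\eta(\widehat f)$. For the flow-value gap, the source-side definition $\nu(h)=\sum_{e=(s,u)}h_e$ gives $\nu(f^{*})-\nu(\widehat f)=\sum_{e=(s,u)}(f^{*}_e-\widehat f_e)\le\sum_{e=(s,u)}|f^{*}_e-\widehat f_e|\le \|\widehat f-f^{*}\|_{1}=\eta(\widehat f)$. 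Both of these steps are routine, so I would keep them short.

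The main step is the excess/deficit bound. At any internal node $u$, at most one of $\textsf{ex}_{\widehat f}(u)$ and $\textsf{def}_{\widehat f}(u)$ is nonzero, so their sum equals $\big|\sum_{e=(v,u)}\widehat f_e - \sum_{e=(u,w)}\widehat f_e\big|$. Since $f^{*}$ is conserved at $u$, the analogous expression for $f^{*}$ vanishes, and subtracting it inside the absolute value and applying the triangle inequality yields $\textsf{ex}_{\widehat f}(u)+\textsf{def}_{\widehat f}(u)\le\sum_{e\text{ incident to }u}|\widehat f_e - f^{*}_e|$. Summing over internal $u$ charges each edge to at most its two endpoints, giving $\textsf{ex}_{\widehat f}+\textsf{def}_{\widehat f}\le 2\|\widehat f - f^{*}\|_{1}=2\eta(\widehat f)$.

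The only real obstacle is this factor of $2$: the literal bound without a constant is false in general, as can be seen by letting $\widehat f$ place one unit of flow on a single internal edge $(a,b)$, which gives $\textsf{ex}+\textsf{def}=2$ while $\eta=1$ (take $f^{*}=0$); a flow-decomposition of $\widehat f - f^{*}$ into paths and cycles runs into the same tight ratio when a path of length $1$ has both endpoints internal. This is benign in context, because Theorem \ref{thm: run-time-flow-0} only requires an $O(\eta(\widehat f))$ bound on the number of path-finding iterations, and the extra constant is absorbed into the $O(\cdot)$.
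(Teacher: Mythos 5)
Your bounds on $\sum_e\max\{\widehat f_e-c_e,0\}$ and on $\textsf{ex}_{\widehat f}+\textsf{def}_{\widehat f}$ follow the same route as the paper: fix a nearest optimal $f^*$, use its feasibility for the capacity-overshoot term and its conservation at internal nodes for the imbalance term. Your remark about the factor of $2$ is correct, and it is a point the paper glosses over: the paper's charging argument assigns each node's imbalance to its incident edges, and since an internal edge is charged from both endpoints the honest bound is $2\eta(\widehat f)$, not $\eta(\widehat f)$; your single-internal-edge example shows the constant $1$ is genuinely false as literally stated. As you note, this is harmless for Theorem~\ref{thm: run-time-flow-0}, which only needs $O(\eta(\widehat f))$.

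The genuine gap is in the first quantity. The lemma bounds $\nu(f^*)-\nu(f)$, where $f$ is the \emph{feasible} flow produced at the end of the feasibility projection, not $\nu(f^*)-\nu(\widehat f)$, which is what you bound. These differ: rounding down over-capacity edges and routing leftover excess back to $s$ (and deficit from $t$) can decrease the flow value, so a bound on $\nu(f^*)-\nu(\widehat f)$ does not by itself control $\nu(f^*)-\nu(f)$. The paper closes this with the accounting step
\[
|\nu(f)-\nu(f^*)|\;\le\;|\nu(\widehat f)-\nu(f^*)|+\sum_e\max\{\widehat f_e-c_e,0\}+\textsf{ex}_{\widehat f}+\textsf{def}_{\widehat f},
\]
i.e., the value lost during projection is at most the total amount rounded down plus the total imbalance, and each term on the right is $O(\eta(\widehat f))$ by the other two parts of the lemma (again only up to a constant, consistent with reading the statement as an $O(\cdot)$ claim). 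You need to add this one step; the rest of your argument then goes through.
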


\begin{proof}
First, we see that $|\nu(f) - \nu(f^*)|$ and $|\nu(\widehat{f})-\nu(f^*)|$ can only differ by
however much value was lost and however much excess and deficit was gained in projecting $\widehat{f}$ to
the feasible flow $f.$
Therefore, we can upper bound $|\nu(f) - \nu(f^*)| $ by
\begin{align*}
|\nu(f) - \nu(f^*)| \leq |\nu(\widehat{f})-\nu(f^*)| &+ \sum_e \max\{\widehat{f}_e-c_e,0\}
+\textsf{ex}_{\widehat{f}}+\textsf{def}_{\widehat{f}}.
\end{align*}
Then, we can further upper bound $|\nu(\widehat{f})-\nu(f^*)|$ by rewriting
the difference between the values of the flows
$$|\nu(f^*) - \nu(\widehat{f})| = \bigg | \sum_{e = (s,v)}f^*_e-\sum_{e = (s,v)}\widehat{f}_e  \bigg | \leq \eta(\widehat{f}).$$
The next term to bound in terms of the $\ell_1$ error is $\sum_e \max\{\widehat{f}_e-c_e,0\}$,
though it is straight forward to see
$$\sum_e \max\{\widehat{f}_e-c_e,0\} \leq \sum_e \max\{\widehat{f}_e-f^*_e,0\} \leq \eta(\widehat{f}).$$
Lastly, we see that the excess/ deficit of any node $v \in V \setminus \{s,t\}$
can be charged to the difference between $\widehat{f}_e$ and $f^*_e$ for $e$ adjacent to $v$, 
as any $f^* \in \mathcal{F}^*$ has excess/ deficit 0 on all non-source and sink nodes.
Therefore, $\textsf{ex}_{\widehat{f}}+ \textsf{def}_{\widehat{f}} \leq  \eta(\widehat{f}).$
\end{proof}

\begin{proof}[Proof of Theorem \ref{thm: run-time-flow-0}]

Given a flow $\widehat{f}$ that does not satisfy capacity constraints, 
Algorithm \ref{alg:EK-warmstart} simply updates the edges
$E' \subseteq E$ that violate capacity $\widehat{f}_e>c_e$
to $\widehat{f}_e \leftarrow c_e$ for $e \in E'$.
This can be done in time $O(|E'|)$. 
Further, rounding down the flow on these edges 
changes the value of the flow and the sum of the excess and deficit by at most
$\sum_e \max\{\widehat{f}_e-c_e,0\}$.

In Lemma \ref{lem:proj-path-exists}, we showed 
that given $\widehat{f}$ that satisfies capacity constraints, 
Algorithm \ref{alg:EK-warmstart} will find an optimal $f^*$.
Next, we analyze the run-time of Algorithm \ref{alg:EK-warmstart}.
Each iteration of the main while loop in Algorithm \ref{alg:EK-warmstart} costs time $O(|E|)$.
Further, the number of iterations in the main while loop in Algorithm \ref{alg:EK-warmstart} is at most
$\textsf{ex}_{\widehat{f}}+\textsf{def}_{\widehat{f}}$.
Let $f$ be the feasible flow obtained by Algorithm \ref{alg:EK-warmstart} at the 
end of the main while loop.
The run-time to produce flow $f$ is
$O(|E|(\textsf{ex}_{\widehat{f}}+\textsf{def}_{\widehat{f}}))$. 

At most $|\nu(f^*)-\nu(f)|$ iterations of any Ford-Fulkerson procedure are needed to arrive at the optimal flow value $\nu(f^*)$ from $f$ by Proposition \ref{prop: feasibile-ws}. 
Each iteration of Ford-Fulkerson also costs $O(|E|)$.

Therefore, the run-time of Algorithm \ref{alg:EK-warmstart} given a prediction which satisfies capacity constraints is at most
$$O (|E| \cdot ( |\nu(f) - \nu(f^*)|+ \textsf{ex}_{\widehat{f}}+\textsf{def}_{\widehat{f}}  )).$$
Combining this with the loss from projecting to satisfy capacity constraints, 
the full run-time of Algorithm \ref{alg:EK-warmstart} is 
\begin{align*}
O\big (|E| \cdot \big ( &\sum_e \max\{\widehat{f}_e-c_e,0\}+|\nu(f) - \nu(f^*)|
+\textsf{ex}_{\widehat{f}}+\textsf{def}_{\widehat{f}} \big )\big).
\end{align*}
We showed that all of the terms multiplying $|E|$ in the above can be bounded by $O(\eta(\widehat{f}))$ (Lemma \ref{lem: ub-rt-terms}).

It is straight-forward to justify the worst-case run-time of Algorithm \ref{alg:EK-warmstart} is bounded by $O(T)$. During the feasibility projection step an auxiliary graph is constructed only three times (recall that from Lemma \ref{lem: order-aux}, we can send flow on $A_f' - B_f'$ paths, then $A_f' - s$ paths, then $t - B_f'$ paths), 
each time with $|V| + 2$ vertices and $O(|E|)$ edges. 
Thus running the chosen Ford-Fulkerson implementation on these graphs takes time $O(T)$. Recall that the optimization step also takes time $O(T)$, since running Ford-Fulkerson starting with a feasible flow is equivalent to running it from scratch on the residual graph as a new input. Thus the total running time is $O(T)$.
\end{proof}

\subsection{PAC-learning Flows}
\label{sec:PAC-learning}

Here, we show theoretically that high quality flows are learnable, thus
giving evidence that flows can be learned for input to Algorithm \ref{alg:EK-warmstart}.
We show that
given a distribution over capacity vectors for a network, 
one can learn a predicted flow from samples that is the best approximation. 

Consider a fixed network $G$ with integral edge capacities $c$. 
An instance is a network $G^i$ on the same vertex and edge set as $G$, 
but the capacity vector is $c^i$, 
where every edge $e$ in $G^i$ must satisfy $c^i_e \in [0,c_{e}] \cap \mathbb{Z}$.
Let $\mathcal{D}$ be an unknown distribution over such instances.
Since an instance is exactly characterized by its new capacity vector, 
we notationally write this as sampling a capacity vector  $c^i \sim \mathcal{D}$.

Suppose we sample instances $c^1,\ldots,c^s$ from $\mathcal{D}$. 
Let $\mathcal{F}$ be the set of all integral flows on $G$ that satisfy the capacities in $c$,
noting that flows in $\mathcal{F}$ do not have to satisfy flow conservation.
Technically, a network $G$ might have several optimal solutions. Here we make the following assumption.

\begin{assumption}\label{asmp: unique}
For a network $G$, there is a uniquely associated, computable optimal flow.\footnote{Such assumptions are standard for the PAC-learning results in learning-augmented based run-time improvements, even if not explicitly stated. See, for instance \cite{SakaueO22}.}
\end{assumption}

Given samples $c^1,\ldots,c^s$, 
we can compute the uniquely associated optimal flows on samples $f^*(c^1)$,$\ldots$, $f^*(c^s)$.
Let $\widehat{f}$ denote a predicted flow.
When our goal is to warm-start Ford-Fulkerson, 
we choose the predicted flow to be that in $\mathcal{F}$ which minimizes the empirical risk 
$\widehat{f}= \text{argmin}_{f \in \mathcal{F}}\frac1s \sum_{j=1}^s ||\widehat{f} - f^*(c^j)||_1.$, 
which given Assumption \ref{asmp: unique} can be efficiently computed, as in Lemma \ref{lem: same-opt}. 

\begin{lemma}
\label{lem: same-opt}
    One can find a flow $\widehat{f} \in \mathcal{F}$ minimizing
    $\frac1s \sum_{j=1}^s ||\widehat{f} - f^*(c^j)||_1$
    from independent samples 
    $c^1,\ldots,c^s \sim \mathcal{D}$ in polynomial time by taking  
    $\widehat{f}_e = \textsf{median}(f^*(c^1)_e,\ldots,f^*(c^s)_e)$ 
    for all $e \in E$.
\end{lemma}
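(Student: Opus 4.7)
The plan is to exploit the separability of the $\ell_1$ objective across edges. Because
\[
\sum_{j=1}^s \|\widehat{f} - f^*(c^j)\|_1 \;=\; \sum_{e \in E}\, \sum_{j=1}^s \bigl|\widehat{f}_e - f^*(c^j)_e\bigr|,
\]
and because the feasibility constraint $\widehat{f} \in \mathcal{F}$ (capacity plus non-negativity, with no flow-conservation coupling) also decomposes into per-edge box constraints $\widehat{f}_e \in [0,c_e]\cap \mathbb{Z}$, the joint minimization reduces to $|E|$ independent one-dimensional problems. So the first step is to split the empirical risk edge-by-edge and argue that we may optimize each coordinate separately.

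The second step is the standard fact that a median minimizes $\sum_{j=1}^s |x - a_j|$ over $x \in \mathbb{R}$ for any real values $a_1,\dots,a_s$. I would give a one-line proof via the subgradient / sorting argument: sort the $a_j$'s, pair up the smallest with the largest, and observe that for any $x$ strictly between the pair the contribution is exactly the gap, while outside the pair it is strictly larger. The minimum is thus attained at any point in the middle order-statistic interval, which for odd $s$ is a single value and for even $s$ is a closed interval containing an integer whenever the $a_j$'s are integers.

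Third, I would verify that choosing $\widehat{f}_e := \textsf{median}(f^*(c^1)_e,\dots,f^*(c^s)_e)$ lands inside $\mathcal{F}$. Each $f^*(c^j)_e$ is an integer in $[0,c^j_e] \subseteq [0,c_e]$ because $c^j_e \le c_e$ by the definition of an instance, so the median (taken as an order statistic) is integral and lies in $[0,c_e]$. Hence $\widehat{f}$ satisfies the capacity and integrality requirements that define $\mathcal{F}$; recall $\mathcal{F}$ does \emph{not} require flow conservation, which is what lets the per-edge decomposition go through without any joint constraint.

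Finally, for the polynomial-time claim I would note that computing each $f^*(c^j)$ takes polynomial time (by Assumption~\ref{asmp: unique} the uniquely associated optimum is computable, and any standard max-flow algorithm suffices), and computing the median across $s$ values for each of the $|E|$ edges adds only $O(|E| \cdot s)$ work using linear-time selection. I do not expect a real obstacle here; the only subtlety worth flagging is the even-$s$ case in the median argument, which is handled by picking the lower order statistic to preserve integrality.
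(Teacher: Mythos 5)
Your proposal is correct and follows essentially the same route as the paper: decompose the empirical risk edge-by-edge (valid because $\mathcal{F}$ imposes no flow-conservation coupling), then invoke the fact that the median minimizes the sum of absolute deviations on each coordinate. The extra care you take with the even-$s$ case and with checking that the median is integral and lies in $[0,c_e]$ is a welcome addition the paper leaves implicit, but it does not change the argument.
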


\begin{proof}
We would like to find $\widehat{f} \in \mathcal{F}$
that minimizes $\frac1s \sum_{j=1}^s ||\widehat{f} - f^*(c^j)||_1.$
Since we do not require flow conservation, the minimization can occur over each edge independently, 
where $\widehat{f}_e$ will be in $[0,c_{e}] \cap \mathbb{Z}$, 
i.e. it suffices to minimize $\frac1s \sum_{j=1}^s |\widehat{f}_e - f^*(c^j)_e|$ for each $e \in E$.
The function $\frac1s \sum_{j=1}^s |\widehat{f}_e - f^*(c^j)_e|$
is continuous and piece-wise linear in $\widehat{f}_e$, 
where the slope changes at the points $\{f^*(c^j)_e\}_j$. 
It is well-known that the minimum of this function in $[0,c_{e}]$ is $\textsf{median}(f^*(c^1)_e,\ldots,f^*(c^s)_e)$.
\end{proof}

We will now state our PAC-learning result.
The proof of this theorem follows that of \citet{DinitzILMV21}.

\begin{thm}
\label{thm: learning-PAC}
Let $c^1,\ldots,c^s$ be sampled i.i.d. from $\mathcal{D}$ and let $\widehat{f}= \text{argmin}_{f \in \mathcal{F}_0}\frac1s \sum_{j=1}^s ||f - f^*(c_j)||_1.$ For $$s=\Omega((\max_{e}c_{e}^2 \cdot  m^2) (\log m + \log(1/p)))$$ and $\widetilde{f} = \text{argmin}_{f \in \mathcal{F}_0} \hspace{1mm} \mathbb{E}_{c^i \sim \mathcal{D}}||f-f^*(c^i)||_1$, then with probability at least $1-p$, $\widehat{f}$ satisfies
$$
\mathbb{E}_{c^i \sim \mathcal{D}} ||\widehat{f}-f^*(c^i)||_1 \leq \mathbb{E}_{c^i \sim \mathcal{D}} ||\widetilde{f}-f^*(c^i)||_1 + O(1).
$$
\end{thm}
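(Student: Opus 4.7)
The plan is to follow the standard uniform-convergence argument for empirical risk minimization over a finite hypothesis class, which is also the template used by \citet{DinitzILMV21}. The key observation is that the hypothesis class $\mathcal{F}$ of integral flows satisfying the capacity bound (but not flow conservation) is finite: since on each edge $e$ the flow value lies in $[0, c_e] \cap \mathbb{Z}$, we have $|\mathcal{F}| \leq \prod_e (c_e + 1) \leq (\max_e c_e + 1)^m$. Likewise, for any $f \in \mathcal{F}$ and any sample $c^i$, the per-sample loss $\ell(f, c^i) = \|f - f^*(c^i)\|_1$ is bounded by $m \cdot \max_e c_e$, since both $f$ and $f^*(c^i)$ lie coordinate-wise in $[0, c_e]$.

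First I would fix any $f \in \mathcal{F}$ and apply Hoeffding's inequality to the i.i.d. bounded random variables $\ell(f, c^1), \ldots, \ell(f, c^s)$, obtaining
\[
\Pr\Big[ \big| \tfrac{1}{s}\textstyle\sum_{j=1}^s \ell(f,c^j) - \mathbb{E}_{c^i \sim \mathcal{D}} \ell(f, c^i) \big| > \epsilon \Big] \leq 2\exp\!\big(-2 s \epsilon^2 / (m \max_e c_e)^2\big).
\]
Then I would union-bound this deviation over all $f \in \mathcal{F}$. Setting the failure probability at most $p$ forces
\[
s \;\geq\; \frac{(m \max_e c_e)^2}{2\epsilon^2}\big( \log|\mathcal{F}| + \log(2/p) \big),
\]
and plugging in $\log|\mathcal{F}| = O(m\log \max_e c_e)$ together with the choice $\epsilon$ equal to a small constant yields the stated sample bound $s = \Omega((\max_e c_e^2 \cdot m^2)(\log m + \log(1/p)))$ (up to the logarithmic term, which matches once one absorbs the $\log(\max_e c_e)$ factor into the constants as in \citet{DinitzILMV21}).

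Finally I would close with the standard ERM comparison. Conditioned on the uniform-convergence event, both $\widehat{f}$ (the empirical minimizer) and $\widetilde{f}$ (the population minimizer) satisfy $|\tfrac{1}{s}\sum_j \ell(\cdot, c^j) - \mathbb{E}\ell(\cdot, c^i)| \leq \epsilon$, and by the definition of $\widehat{f}$ we have $\tfrac{1}{s}\sum_j \ell(\widehat{f}, c^j) \leq \tfrac{1}{s}\sum_j \ell(\widetilde{f}, c^j)$. Chaining the three inequalities gives $\mathbb{E}\ell(\widehat{f}, c^i) \leq \mathbb{E}\ell(\widetilde{f}, c^i) + 2\epsilon$, which is the desired $O(1)$ additive gap.

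The main obstacle I anticipate is bookkeeping the sample-complexity constants so that the bound really matches the stated form: the naive union bound produces $m \log \max_e c_e$ rather than $\log m$ in the logarithmic factor, and reconciling these either requires absorbing into constants or a slightly sharper covering argument exploiting the fact that only flow values that differ by at least $1$ produce distinct hypotheses. The use of Assumption~\ref{asmp: unique} is purely notational here: it guarantees that $f^*(c^i)$ is a well-defined, measurable function of the sample so that the loss $\ell(f, c^i)$ is itself a well-defined random variable, which is what Hoeffding requires.
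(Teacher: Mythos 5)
Your proof is correct, but it takes a genuinely different route from the paper's. The paper never exploits the finiteness of the hypothesis class: it defines the loss class $\mathcal{H} = \{c^i \mapsto \|f^*(c^i) - f\|_1 \mid f \in \mathbb{R}^m\}$, bounds its pseudo-dimension by $O(m\log m)$ via a reduction to the class $\{x \mapsto \|y-x\|_1 \mid y \in \mathbb{R}^m\}$ (quoting Theorem 19 of \citet{DinitzILMV21}), and then invokes a pseudo-dimension-based uniform convergence theorem with range bound $H \leq m\cdot\max_e c_e$. Your Hoeffding-plus-union-bound argument over the finite class $\mathcal{F}$ is more elementary and sidesteps the pseudo-dimension machinery entirely, at the price of replacing the paper's $O(m\log m)$ complexity term with $\log|\mathcal{F}| = O(m\log\max_e c_e)$; which is smaller depends on how the capacities compare with $m$, and the pseudo-dimension route has the advantage of being insensitive to the magnitude of the capacities. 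The discrepancy you flag between your $m\log\max_e c_e$ and the theorem's stated $\log m$ is not a defect you need to repair: the paper's own proof concludes with a sample bound containing $m\log m$ rather than $\log m$, so the extra factor of $m$ in the logarithmic term is present in both arguments and reflects an inconsistency between the theorem statement and its proof rather than a gap in your approach. The remaining ingredients---the range bound $m\cdot\max_e c_e$, the ERM chaining with constant $\epsilon$ to obtain the $O(1)$ additive gap, and the role of Assumption \ref{asmp: unique} in making $f^*(c^i)$ well-defined---match the paper's.
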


In the proof of Theorem \ref{thm: learning-PAC}, 
we will use some well-known results regarding the pseudo-dimension of a class of functions.

The VC dimension is a quantity that captures the complexity of a family of binary functions, and
the pseudo-dimension is the analog of this for real-valued functions
Specifically, the \textbf{pseudo-dimension} of a family of real-valued functions $\mathcal{H}$
is the largest sized subset shattered by $\mathcal{H}$. 
A subset $S = \{x_1,\ldots,x_s\}$ of $X$ is \textbf{shattered} by $\mathcal{H}$ if there exists real-valued witnesses $r_1,\cdots,r_s$ such that for each of the $2^s$ subsets $T$ of $S$,
there exists a function $h \in \mathcal{H}$ with $h(x_i) \leq r_i$ if and only if $i \in T$.

The following theorem relates the convergence of the sample mean of some $h \in \mathcal{H}$ to its expectation, 
and this relation depends on the pseudo-dimension.
\begin{thm}[Uniform convergence]
\label{thm:uniform_convergence}
Let $\mathcal{H}$ be a class of functions with domain $X$ and range in $[0,H]$. Let $d_{\mathcal{H}}$ be the pseudo-dimension of $\mathcal{H}$. For every distribution $\mathcal{D}$ over $X$, every $\epsilon >0$, and every $\delta \in (0,1]$, if 
$$ 
s \geq c (H/\epsilon)^2 (d_{\mathcal{H}} + \ln (1/\delta))
$$
for some constant $c$, then with prob at least $1-\delta$ over $s$ samples $x_1\ldots,x_s \in \mathcal{D}$,
$$
\left | \left (  \frac{1}{s} \sum_{i=1}^s h(x_i)\right ) - \mathbb{E}_{x \sim\mathcal{D}} [h(x)]\right | < \epsilon.
$$
\end{thm}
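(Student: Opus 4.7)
The statement is a classical uniform convergence result from statistical learning theory, so the plan is to follow the standard template that combines symmetrization, a pseudo-dimension based covering bound, and Hoeffding concentration.

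First I would set up the symmetrization step. Let $x_1,\ldots,x_s$ be the i.i.d.\ sample and let $x_1',\ldots,x_s'$ be an independent ``ghost'' sample from $\mathcal{D}$. A standard double-sample argument shows that, whenever $s$ is large enough that each empirical average concentrates around its mean, the event
\[
\sup_{h \in \mathcal{H}} \Bigl|\tfrac{1}{s}\sum_i h(x_i) - \mathbb{E}_{x \sim \mathcal{D}}[h(x)]\Bigr| > \epsilon
\]
is implied (up to a constant factor loss) by the event that $\sup_{h} |\tfrac{1}{s}\sum_i h(x_i) - \tfrac{1}{s}\sum_i h(x_i')| > \epsilon/2$. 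This reduces controlling a supremum involving the unknown distribution to a supremum over a $2s$-sample quantity. A Rademacher version of the same step replaces the ghost sample by random signs $\sigma_i \in \{\pm 1\}$, giving
\[
\Pr\Bigl[\sup_h \bigl|\tfrac{1}{s}\sum_i \sigma_i h(x_i)\bigr| > \epsilon/4\Bigr].
\]

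Next I would use the pseudo-dimension to bound the number of distinct behaviors of $\mathcal{H}$ on a fixed sample. After conditioning on $x_1,\ldots,x_s$ (and on the ghost sample or signs), one threshold-discretizes the range $[0,H]$ at $O(H/\epsilon)$ levels; by the Sauer--Shelah style lemma for pseudo-dimension, the number of sign patterns realized by $\{(h(x_i) \le t_j)\}_{i,j}$ as $h$ ranges over $\mathcal{H}$ is at most $(es/d_{\mathcal{H}})^{d_{\mathcal{H}}} \cdot \mathrm{poly}(H/\epsilon)$. Equivalently, one builds an $\epsilon$-cover of $\mathcal{H}$ restricted to the sample whose cardinality is polynomial in these parameters.

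Finally I would apply Hoeffding's inequality to each fixed $h$ in the cover: since $h(x) \in [0,H]$, a single empirical average deviates from its expectation by more than $\epsilon/4$ with probability at most $2\exp(-c' s \epsilon^2 / H^2)$. A union bound over the cover multiplies this by $(es/d_{\mathcal{H}})^{d_{\mathcal{H}}}$; taking logs and requiring the overall failure probability to be at most $\delta$ yields exactly the sample complexity
\[
s \;\ge\; c\,(H/\epsilon)^2 \bigl(d_{\mathcal{H}} + \ln(1/\delta)\bigr),
\]
after absorbing the logarithmic term in $s$ on the left-hand side into the constant $c$.

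The main obstacle, and the step that is genuinely nontrivial, is the pseudo-dimension based covering bound: reducing a supremum over an infinite, real-valued class $\mathcal{H}$ to a union bound over a finite set of cardinality $\mathrm{poly}(s, H/\epsilon)^{d_{\mathcal{H}}}$. The symmetrization and Hoeffding pieces are routine, but one must be careful in the covering step to use the right notion (an $\ell_\infty$ cover on the sample, or equivalently a threshold discretization) so that the Sauer--Shelah bound for pseudo-dimension applies cleanly. In a formal write-up I would either cite Pollard's or Anthony--Bartlett's textbook version of this bound rather than re-derive the combinatorial lemma from scratch.
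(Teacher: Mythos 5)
The paper does not prove this theorem at all: it is imported as a classical uniform-convergence result (of the Pollard / Anthony--Bartlett type) and used as a black box in the proof of Theorem \ref{thm: learning-PAC}, so there is no ``paper proof'' to compare against. Your outline is the standard textbook argument for exactly this statement --- symmetrization with a ghost sample or Rademacher signs, a Sauer--Shelah-style covering bound controlled by the pseudo-dimension, then Hoeffding plus a union bound over the cover --- and as a proof sketch it is the right approach.

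One step in your write-up is genuinely sloppy, though. The union bound over a cover of size $(es/d_{\mathcal{H}})^{d_{\mathcal{H}}}\cdot\mathrm{poly}(H/\epsilon)$ produces a term of order $d_{\mathcal{H}}\ln(s)$ (or $d_{\mathcal{H}}\ln(H/\epsilon)$) inside the sample-complexity bound, and this cannot simply be ``absorbed into the constant $c$'' as you claim: a factor growing with $s$ is not a constant. The honest outcomes of your argument are either the slightly weaker bound $s \geq c\,(H/\epsilon)^2\bigl(d_{\mathcal{H}}\ln(H/\epsilon) + \ln(1/\delta)\bigr)$ (obtained by the standard self-bounding manipulation that turns an $s \geq a\ln s + b$ condition into $s \geq 2a\ln a + 2b$), or the stated log-free bound, which requires a chaining argument (Talagrand-style) rather than a single-scale cover. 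Since the theorem is being cited rather than re-proved, either version suffices for the application in the paper, but if you intend to prove the statement exactly as written you should either invoke the chaining-based version from the literature or weaken the sample complexity by the logarithmic factor.
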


Equipped with Theorem \ref{thm:uniform_convergence}, 
we are ready to prove our PAC-learning result.

\begin{proof}[Proof of Theorem \ref{thm: learning-PAC}]
We will construct a class of functions that contains the loss functions of the flow $f$
given capacity constraints $c^i$. 
Then, we will apply Theorem \ref{thm:uniform_convergence} to this class of functions.

For every integral flow $f \in \mathbb{R}^{|E|}$ that satisfies the capacity vector $c^i$, 
let the function $g_f(c^i) = || f^*(c^i) - f ||_1$ be the loss function of $f$ on $c^i$.
Then let $\mathcal{H} = \{g_f \mid f \in \mathbb{R}^m\}$ 
be the family of all of these loss functions.

We saw in Lemma \ref{lem: same-opt} how to efficiently compute the empirical risk minimizer. 
Also, the upper bound of the range of the loss functions, 
i.e. $H$ in the statement of Theorem \ref{thm:uniform_convergence}, 
is at most $m \cdot \max_e c_{e}$.
To prove our lemma, it remains to bound the pseudo-dimension of $\mathcal{H}$.

We will upper bound the pseudo-dimension of $\mathcal{H}$ by showing it is no more than the pseudo-dimension of another class of functions, $\mathcal{H}_m$, whose pseudo-dimension is already known.
Let $\mathcal{H}_m= \{h_y \mid y \in \mathbb{R}^m\}$ for $h_y(x) = ||y-x||_1$.
The following result appears as Theorem 19 in \cite{DinitzILMV21},
and the reader may refer to that paper for its proof. 
\begin{lemma}
The pseudo-dimension of $\mathcal{H}_m$ is at most $O(m \log m).$
\end{lemma}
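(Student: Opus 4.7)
The plan is to bound the pseudo-dimension by counting sign patterns: for any $s$ points $x_1,\ldots,x_s \in \mathbb{R}^m$ and witnesses $r_1,\ldots,r_s \in \mathbb{R}$, I would upper bound the number of distinct binary vectors $(\mathbf{1}[h_y(x_i) \leq r_i])_{i=1}^s$ achievable as $y$ ranges over $\mathbb{R}^m$. If this count is strictly less than $2^s$, the set cannot be shattered, so an upper bound of the form $O(s^{2m})$ will translate into the pseudo-dimension bound $s = O(m \log m)$.

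The key structural observation I would exploit is that $h_y(x_i) = \sum_{j=1}^m |y_j - x_{i,j}|$ becomes an affine function of $y$ once the sign of every $y_j - x_{i,j}$ is fixed. I would therefore consider the arrangement of axis-aligned hyperplanes $H_{i,j} = \{y : y_j = x_{i,j}\}$ in $\mathbb{R}^m$. These form $m$ families of mutually parallel hyperplanes with at most $s$ hyperplanes per family, so they partition $\mathbb{R}^m$ into at most $(s+1)^m$ open cells. On each cell $C$, the sign of $y_j - x_{i,j}$ is constant for every pair $(i,j)$, so I can write $h_y(x_i) = \langle a_i, y \rangle + b_i$ with $a_i \in \{-1,+1\}^m$ and some constant $b_i$; hence the event $h_y(x_i) \leq r_i$ restricted to $C$ is the intersection of $C$ with a single affine half-space.

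Next, within a single cell $C$, I would apply the classical hyperplane arrangement bound to the $s$ half-spaces just produced: they cut $C$ into at most $\sum_{k=0}^m \binom{s}{k} = O(s^m)$ sub-regions, each realizing a single sign pattern of $(h_y(x_i) - r_i)_{i=1}^s$. Summing over the $(s+1)^m$ cells of the axis-aligned arrangement gives at most $O(s^{2m})$ distinct sign patterns total. Shattering requires $2^s$ of them, so $2^s \leq c \cdot s^{2m}$ for an absolute constant $c$; taking logarithms yields $s \leq 2m \log_2 s + O(1)$, which by a standard iteration implies $s = O(m \log m)$.

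The main obstacle is obtaining the $O(m \log m)$ exponent rather than the weaker $O(m^2)$ bound one would get by treating each $\ell_1$-ball as the intersection of its $2^m$ defining half-spaces and applying a single arrangement bound to all $s \cdot 2^m$ of them at once. The savings come from the two-stage count: the axis-aligned cells are cheap precisely because the hyperplanes are parallel within each coordinate (only $(s+1)^m$ cells, not $(sm)^m$), and only after conditioning on a cell do we pay the cost of counting sign patterns of $s$ generic half-spaces. Everything else---the arrangement bound and the inversion of $2^s \leq c s^{2m}$---is elementary.
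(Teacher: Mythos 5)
Your argument is correct, but note that the paper does not actually prove this lemma: it is imported verbatim as Theorem 19 of \cite{DinitzILMV21}, with the reader referred there for the proof. So you have supplied a self-contained argument where the paper supplies a citation. The argument itself is the standard growth-function route and it works: the axis-aligned arrangement $\{y_j = x_{i,j}\}$ has only $(s+1)^m$ cells because the hyperplanes within each coordinate are parallel, on each cell every $h_y(x_i)$ is affine in $y$ with gradient in $\{-1,+1\}^m$, and the classical bound $\sum_{k=0}^m \binom{s}{k}$ on the regions cut out by $s$ further hyperplanes gives at most $(cs^2)^m$ sign patterns overall, which forces $s = O(m\log m)$ under shattering. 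Your remark about why the naive one-stage count (writing each sublevel set $\{x : \|y-x\|_1 \le r\}$ as an intersection of $2^m$ half-spaces) only yields $O(m^2)$ is also accurate, and the two-stage conditioning is exactly the right fix. The only loose end is that you count open cells and open sub-regions, whereas the shattering condition $h_y(x_i) \le r_i$ is closed, so witnesses $y$ lying on a hyperplane of either arrangement realize patterns attached to lower-dimensional faces; the total face count of an arrangement of $s$ hyperplanes in $\mathbb{R}^m$ is still $s^{O(m)}$, so the conclusion is unaffected, but a careful write-up should say this. With that caveat the proof is sound and could stand in place of the citation.
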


Now all that remains is to prove the following lemma, 
relating the pseudo-dimensions of the two classes.
\begin{lemma}
If the pseudo-dimension of $\mathcal{H}_m$ is at most $d$, 
then the pseudo-dimension of $\mathcal{H}$ is at most $d$.
\end{lemma}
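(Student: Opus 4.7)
The plan is to exhibit a reduction: every loss function in $\mathcal{H}$ is just a function in $\mathcal{H}_m$ evaluated at an image point under the (deterministic) map $c \mapsto f^*(c)$, so any set shattered by $\mathcal{H}$ pulls back to a set shattered by $\mathcal{H}_m$ with the same witnesses, yielding the pseudo-dimension bound.

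First I would record the key identity. For every $f \in \mathbb{R}^m$ and every capacity vector $c^i$,
\[
g_f(c^i) \;=\; \|f^*(c^i) - f\|_1 \;=\; h_f\bigl(f^*(c^i)\bigr),
\]
so the action of $g_f$ on $c^i$ coincides with the action of $h_f \in \mathcal{H}_m$ on the point $y_i := f^*(c^i) \in \mathbb{R}^m$. Now suppose $\{c^1,\ldots,c^s\}$ is shattered by $\mathcal{H}$ with witnesses $r_1,\ldots,r_s$. For each subset $T \subseteq \{1,\ldots,s\}$, there exists $g_f \in \mathcal{H}$ with $g_f(c^i) \leq r_i$ iff $i \in T$; by the identity above, the same $f$ certifies $h_f(y_i) \leq r_i$ iff $i \in T$. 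So if the points $y_1,\ldots,y_s$ are distinct, the set $\{y_1,\ldots,y_s\}$ is shattered by $\mathcal{H}_m$ using witnesses $r_1,\ldots,r_s$, which by hypothesis forces $s \leq d$.

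The one subtlety, and the only real obstacle, is verifying that the $y_i$ are indeed distinct; the map $f^*$ need not be injective a priori. I would dispose of this by contradiction: suppose $y_i = y_j$ for some $i \neq j$. Then $h_f(y_i) = h_f(y_j)$ for every $f$, so $g_f(c^i) = g_f(c^j)$ for every $g_f \in \mathcal{H}$. Apply shattering to $T = \{i\}$: some $f$ gives $h_f(y_i) \leq r_i$ and $h_f(y_j) > r_j$, hence $r_j < h_f(y_i) \leq r_i$. Apply it again to $T = \{j\}$: some $f'$ gives $h_{f'}(y_j) \leq r_j$ and $h_{f'}(y_i) > r_i$, hence $r_i < h_{f'}(y_j) \leq r_j$. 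The two inequalities $r_j < r_i$ and $r_i < r_j$ contradict one another, so the $y_i$ must be distinct.

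Combining the two observations, any set shattered by $\mathcal{H}$ has size at most $d$, so the pseudo-dimension of $\mathcal{H}$ is at most $d$, as claimed.
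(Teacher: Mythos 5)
Your proof is correct and takes essentially the same approach as the paper: map each shattered point $c^i$ to $f^*(c^i)$ and reuse the same functions and witnesses to shatter the image set under $\mathcal{H}_m$. Your extra step verifying that the points $f^*(c^i)$ are distinct is a detail the paper's proof leaves implicit, and your argument for it is sound.
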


\begin{proof}
Let $S = \{c^1,\ldots,c^d\}$ be a set that is shattered by $\mathcal{H}$. Let $r_1,\ldots,r_d \in \mathbb{R}$ be the witnesses such that for all $S' \subseteq [d]$, there exists some $g_{f^{S'}} \in \mathcal{H}$ with $g_{f^{S'}}(c^j)=||f^*(c^j)-f^{S'}||_1 \leq r_j$ if and only if $j \in S'$. 

We will construct a set $\widetilde{S}$ of size $d$ from $S$ that is shattered by $\mathcal{H}_m$. Let $\widetilde{S} = \{f^*(c^1),\ldots,f^*(c^d)\}$ and fix some $S' \subseteq [d]$. Then 
$h_{f^{S'}}(f^*(c^j))=||f^{S'}-f^*(c^j) ||_1\leq r_j$ if and only if $j \in S'$.
\end{proof}

Plugging $H \leq m \cdot \max_e c_{e}$ and $d_{\mathcal{H}} \leq O(m \log m)$ into Theorem \ref{thm:uniform_convergence}, we see that 
it suffices to take 
$$ 
s \geq \Omega( (\max_e c_{e}\cdot m /\epsilon)^2 (m \log m + \ln (1/\delta))).
$$
\end{proof}

\section{Faster Flows via Shorter Projection Paths}
\label{sec: instance-robust}

Here, we show that Algorithm \ref{alg:EK-warmstart} 
can be even faster for a certain class of networks.  
Intuitively, the additional speed-up is obtained due to finding shorter projection paths.  
We will then explain in Section \ref{sec: empirical} how simple 
image segmentation networks fit into this class,
so this theory explains the speed-up we see on the image segmentation instances.

Let $G=(V,E)$ be a directed graph, with $s,t \in V$.
Suppose $V \setminus \{s,t\}$ forms a two-dimensional grid.
Further for $u,v \in V \setminus \{s,t\}$,
if $e = (u,v) \in E$ then the reverse direction edge $\cev{e}=(v,u) \in E$.
We consider a pair of networks $G^1,G^2$ on $G$.
The only difference in these networks is their capacity vectors,
though we assume they have capacity vectors $c^1,c^2 \in \{1,M\}^m$ 
for some large integer $M$, 
and we assume that all edges incident to $s$ or $t$ have capacity $M$.

For $\ell \in [2]$, let $E_\ell = \{e \in E \mid c^\ell_e =1 \}$.
We call $G^\ell$ \textbf{separable}
if the vertices in $V \setminus \{s,t\}$ can be partitioned into subsets
$V_\ell$ and $W_\ell = V \setminus (\{s,t\} \cup V_\ell )$,
such that there is some $x \in V_1 \cap V_2$ with $(x,t) \in E$ 
and $y \in W_1 \cap W_2$ with $(s,y) \in E$,
and for all $e=(u,v) \in E_\ell$, 
$e$ has one endpoint in $V_{\ell}$ and the other in $W_\ell$.
We say the transition between $G^1$ and $G^{2}$ is \textbf{d-local} if
for all pairs of distinct nodes $u,v \in (V_{2} \setminus V_1) \cup  (W_{2} \setminus W_1)$,
their distance, i.e., the length of the shortest path between them, is at most $d$.
Here, $d$ controls the length of the projection paths.

While we require  additional assumptions for our theoretical results,
our empirical results  in Section \ref{sec: empirical},
show that these assumptions are sufficient but not necessary for Algorithm \ref{alg:EK-warmstart}
to take advantage of short projection paths.

For the proof of Theorem \ref{thm: instance-robust-0},
the feasibility projection of $\widehat{f}$  has several steps.
First, Algorithm \ref{alg:EK-warmstart} can choose paths along which to 
route flow so that either the excess and deficit are fixed,
or $V_2$ only contains nodes with positive excess and 
$W_2$ only contains nodes with positive deficit
or vice versa.
We argue that this fixes the total excess and deficit 
by at most $\eta(\widehat{f})$ with paths of length $O(d)$.
Then, it remains to fix the excess/ deficit in 
$V_2$ and $W_2$ using $s$ and $t$.
These paths have unbounded length, i.e. length $O(|E|)$, 
and we can argue the change in flow value is at most $||E_2| - |E_1|| \leq O(d^2)$, 
where the upper bound comes from he definition of $d$-local.

\begin{thm}[Restates Theorem \ref{thm: instance-robust-0}]
\label{thm:instance-robustness-formal}
Fix separable networks $G^1$ and $G^2$, 
where the transition between 
them is $d$-local.
For $\widehat{f}$ an optimal flow on $G^1$, 
the run-time of Algorithm \ref{alg:EK-warmstart} seeded with $\widehat{f}$ on 
$G^2$ to find an optimal $f^*$ is
$O(d^2 \cdot |E| +d \cdot \eta(\widehat{f}))$.
\end{thm}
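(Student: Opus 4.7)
The plan is to decompose the work done by Algorithm~\ref{alg:EK-warmstart} into three phases---short-path feasibility restoration, long-path feasibility restoration, and the final Ford-Fulkerson optimization---and bound each phase separately in terms of $d$ and $\eta(\widehat{f})$.

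First I would localize the source of infeasibility. Since $\widehat{f}$ is optimal (hence feasible) on $G^1$, it can only violate the constraints of $G^2$ on edges where the capacity differs, i.e.\ on $E_1 \triangle E_2$. By the definition of separability, every such edge has both endpoints in $V \setminus \{s,t\}$, and by $d$-locality the set $R := (V_2 \setminus V_1) \cup (W_2 \setminus W_1)$, which contains all of these endpoints, has diameter at most $d$ in $G$. Consequently, after the initial rounding step of Algorithm~\ref{alg:EK-warmstart}, every vertex with nonzero excess or deficit lives inside $R$, and by Lemma~\ref{lem: ub-rt-terms} the total initial excess/deficit, the total capacity overflow, and $|\nu(\widehat{f})-\nu(f^*)|$ are each at most $\eta(\widehat{f})$. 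A second combinatorial observation to record up front is that $|E_1 \triangle E_2| = O(d^2)$, since all edges of $E_1 \triangle E_2$ are incident to $R$, whose vertex set has size $O(d^2)$ in a grid of diameter $d$.

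Next, by Lemma~\ref{lem: order-aux} the main \textsf{while} loop first exhausts all $A'_f$--$B'_f$ projection paths. Because all excess and deficit nodes sit inside the diameter-$d$ region $R$, for every $u \in A'_f$ there is some $v \in B'_f$ reachable from $u$ in $G_f$ via a path of length $O(d)$ that stays near $R$. Running the chosen path-finding subroutine from $u$, truncated at depth $d$, finds such a path after touching only $O(d)$ edges, and sending flow down it also costs $O(d)$. Each iteration reduces $\textsf{ex}_f+\textsf{def}_f$ by at least one, so at most $\eta(\widehat{f})$ such iterations occur, contributing $O(d\cdot\eta(\widehat{f}))$ in total. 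Once no $A'_f$--$B'_f$ path remains, the algorithm moves on to $A'_f$--$s$ and $t$--$B'_f$ paths, which may have length $\Theta(|E|)$. Here I would use Proposition~\ref{prop: ex-equals-def} with $S = R$: after the short-path phase, the residual excess in $R$ equals the net flow $f$ sends between $R$ and $V \setminus R$, and this is bounded by the number of altered edges crossing the boundary of $R$, which is $O(d^2)$. Hence at most $O(d^2)$ long projection paths are ever processed, each in time $O(|E|)$, giving $O(d^2 \cdot |E|)$ for this phase. The same $O(d^2)$ estimate bounds the remaining gap $\nu(f^*)-\nu(f)$ entering the final optimization phase (since the only remaining flow-value deficit is whatever short and long projection paths did not recover, and that too is controlled by $|E_1 \triangle E_2|$), so Ford-Fulkerson completes in $O(d^2)$ augmentations of cost $O(|E|)$ apiece. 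Summing yields $O(d^2\cdot |E| + d\cdot \eta(\widehat{f}))$.

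The main technical obstacle is the ``local BFS in $O(d)$ time'' step used in Phase~1: the generic augmenting-path subroutine scans $\Theta(|E|)$ edges per call, so to recover the sharper $d\cdot\eta(\widehat{f})$ term I must argue that throughout this phase the supports of the relevant $A'_f$--$B'_f$ paths remain inside the radius-$d$ ball around $R$, so that BFS can legitimately be truncated. The separable structure---edges of $E_\ell$ always crossing between $V_\ell$ and $W_\ell$, and the fixed anchor vertices $x \in V_1 \cap V_2$ and $y \in W_1 \cap W_2$ with edges to $t$ and $s$ respectively---is what pins down the geometry and prevents excess from migrating outside $R$ during projection. The other delicate accounting is the Proposition~\ref{prop: ex-equals-def} argument that caps the number of long paths by $O(d^2)$ rather than by $\eta(\widehat{f})$; this is where the $d$-locality hypothesis does its essential work, and where I expect most of the care in the write-up to be needed.
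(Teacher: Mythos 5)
Your three-phase decomposition---localize the damage to a diameter-$d$ region $R$, cancel excess against deficit with length-$O(d)$ projection paths, then handle the remainder with $O(d^2)$ long paths through $s$ and $t$---is the same skeleton the paper uses, and your Phases 1--2 essentially match its argument. The gap is in your justification that only $O(d^2)$ long projection paths survive to Phase 3. You invoke Proposition \ref{prop: ex-equals-def} with $S=R$ and assert that the residual excess ``equals the net flow between $R$ and $V\setminus R$,'' which is in turn ``bounded by the number of altered edges crossing the boundary of $R$.'' Two things go wrong here. First, the proposition controls the \emph{difference} between total deficit and total excess in $S$, not their sum; after the short-path phase one of $V_2,W_2$ still carries positive excess and the other positive deficit, and it is their sum that counts the remaining paths. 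Second, the flow crossing the boundary of $R$ is not measured in edges: an edge of $E_2\setminus E_1$ had capacity $M$ in $G^1$ and may carry flow as large as $|E_1|$ under the optimal $\widehat{f}$, so rounding it down to capacity $1$ perturbs the flow across the boundary of $R$ by up to $\widehat{f}_e-1$ per edge. Under your accounting the quantity you need to be $O(d^2)$ is only bounded by $O(\eta(\widehat{f}))$, which would collapse the result back to the generic $O(|E|\cdot\eta(\widehat{f}))$ bound of Theorem \ref{thm: run-time-flow-0}.

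The paper closes this hole by a different mechanism: it exploits the min-cut structure of separable networks---$E_\ell$ is a minimum cut, so $\nu(\widehat{f})=|E_1|$ and $\nu(f^*)=|E_2|$---together with an explicit symmetrization during the short-path phase (routing so that each node's excess matches the total deficit of its $E_2$-neighbors across the boundary) to guarantee that, after all value-preserving short projections, the leftover excess/deficit that must be shipped to $s$ or pulled from $t$ is exactly the value gap $\left||E_1|-|E_2|\right|\le O(d^2)$; the same quantity bounds the final augmentation count. Some argument of this kind, tying the Phase-3 residual to the change in max-flow \emph{value} rather than to the flow crossing the boundary of $R$, is needed for your plan to go through. (Your remaining worry about truncating BFS to depth $O(d)$ is shared with the paper, which likewise assumes the subroutine's cost is proportional to the length of the path it returns.)
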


\begin{proof}
For the network $G^{2}$ with capacity constraints $c^{2}$,
project $\widehat{f}$ to satisfy $c^{2}$ as in Algorithm \ref{alg:EK-warmstart},
and let the resulting flow be $f$.
Note that a node can only be in $A_f' \cup B_f'$ if it is incident to an edge
whose flow was rounded down in the projection from $\widehat{f}$ to $f$.
An edge $(u,v)$ has flow rounded down in the projection 
only if it went from being an edge with both $u$ and $v$ either in $V_1$ or outside of $V_{1}$
to being a boundary edge, 
i.e. one of $u$ or $v$ is in $V_{2}$ and one is in $W_{2}$.

To project $f$ to a flow that satisfies flow conservation,
Algorithm \ref{alg:EK-warmstart} can choose paths to 
first route flow from nodes with positive excess in $V_{2}$ 
to nodes with positive deficit in $V_{2}$,
and route flow from nodes with positive excess in $W_{2}$ 
to nodes with positive deficit in $W_{2}$. 
By our assumption that the networks are $d-$local,
excess and deficit within $V_{2}$ or in $W_{2}$ have distance 
at most $d$.
Since the capacity of the non-boundary edges is $M$, 
Algorithm \ref{alg:EK-warmstart} can route flow until there is only excess or only deficit contained
within $V_{2}$ and within $W_{2}$ with these projection paths. 
Additionally, if $V_{2}$ contains a node with positive deficit and $W_{2}$ 
contains a node with positive excess and
if there are any edges with positive flow going from $V_{2}$ to $W_{2}$,
one can send flow on the reverse edge and remove that excess/ deficit 
with a projection path of length at most $2d$.

Further, Algorithm \ref{alg:EK-warmstart} can choose paths to perform this flow routing so excess and deficit are symmetric across the boundary.
Specifically, for a node $u \in V_{2}$, one can route flow so
$\textsf{ex}_f(u) = \sum_{v: (u,v) \in E_{2}}\textsf{def}_f(v)$, 
and this holds for the deficits, and analogously for the nodes of $W_{2}$.
Note that this re-routing is possible since $M$ is sufficiently large.
From the proof of Theorem \ref{thm: run-time-flow-0}, we know that 
$\textsf{ex}_f+\textsf{def}_f \leq \eta(\widehat{f})$, 
so the run-time of re-routing this excess and deficit is at most $O(d \cdot  \eta(\widehat{f}))$.

We will show any remaining deficit/ excess can be handled by paths using $s$ and $t$.
First, the max flow on $G^{1}$ has value $|E_1|$,
since the edges crossing from $W_1$ into $V_1$ form a cut.
On the other hand, for large $M$ and by the existence of $x \in V_1 \cap V_2$ 
and $y \in W_1 \cap W_2$,
there exists feasible flows with every edge $(u,v)$ 
with $u \in W_1$ and $v \in V_{1}$ having flow 1 incoming to $V_1$.
Second, our re-routing procedure made 
the excess within $V_{2}$ symmetric across the boundary 
to the deficit in $W_{2}$, 
so there is either positive excess in $V_{2}$ and positive deficit in $W_{2}$,
or vice versa.

We  use these two observations.
Suppose that after  re-routing, there is deficit inside of $V_{2}$
and excess outside of it.
As a consequence of our routing, 
we can assume all edges with positive flow in $E_{2}$ 
are directed from $W_{2}$ to $V_{2}$.
For every node $u$ with positive excess in $W_{2}$, 
take the excess of $u$ and send it to $s$, 
which is possible from the conditions for being separable. 
Similarly, for every node $u$ with positive deficit inside of $V_{2}$,
find paths from $t$ to $u$ and send $\textsf{def}_f(u)$ from $t$ to $u$, 
and this is again possible by the separable condition.
The resulting flow is feasible.
Further, there are no $s-t$ paths since all edges in $E_{2}$ going into $V_2$ are saturated
and form a cut, so the flow is optimal.
So re-routing this flow using $s$ and $t$ takes time 
$O(|E| (|E_{1}|-|E_{2}|))$.

When after  re-routing 
there is excess inside of $V_{2}$ and deficit inside of $W_{2}$,
the proof is similar.
The edges with no excess/ deficit incident to them have flow 1 going into $V_{2}$.
Since the deficit is outside of $V_{2}$,
the boundary edges crossing from $V_{2}$ into $W_{2}$ 
have flow 1 going out of $V_{2}$. 
Send flow from $s$ to the nodes with positive deficit inside of $W_{2}$,
and send the excess inside of $V_{2}$ to $t$, 
which is possible by the conditions for being separable. 
The resulting flow is feasible, though perhaps not optimal, as one may need to saturate 
new boundary edges. 
The run-time is still $O(|E| (|E_{2}|-|E_{1}|))$.

By the definition of $d-$local, 
 $\left ||E_{2}|-|E_{1}| \right | \leq O(d^2)$.
Therefore, the run-time of Algorithm \ref{alg:EK-warmstart} 
on these locally-changed instances is at most 
$O(d^2 \cdot |E|+ d \cdot \eta(\widehat{f})))$.
\end{proof}

\section{Empirical Results}
\label{sec: empirical}
In this section, we validate the theoretical results in Sections \ref{sec: FF} and \ref{sec: instance-robust}\footnote{The code is published at \url{https://github.com/wang-yuyan/warmstart-graphcut-algorithms-pulic}.}.We consider image segmentation, a core problem in computer vision that aims at separating an object from the background in a given image. 
The problem is re-formulated as a max-flow/ min-cut optimization problem in a line of work \citep{boykov2001interactive, boykov2004experimental, boykov2006graph} and solved with combinatorial graph-cut algorithms, including Ford-Fulkerson.

We do not attempt to provide state-of-the-art run-time results 
on image segmentation.
Our goal is to show that on real-world networks, 
warm-starting Ford-Fulkerson leads to big run-time improvements 
compared to cold-start Ford-Fulkerson.
We highlight the following:
\begin{compactitem}
    \item For both Edmonds-Karp and Dinic's implementation of Ford-Fulkerson, warm-start offers improved running time compared to starting the algorithm from scratch (referred to as a \textbf{cold-start}).
    \item As we increase the number of image pixels (i.e., its resolution), the size of the constructed graph increases and the savings in time becomes more significant.
    \item The feasibility projection step in Algorithm \ref{alg:EK-warmstart} has high performance. It returns a feasible flow that is only slightly sub-optimal, and it finds short paths to fix the excess/deficits in doing so. Both factors contribute to warm-start being way more efficient than cold-start.
\end{compactitem}

\noindent \textbf{Datasets and data prepossessing\quad}
We use four different image groups from the \emph{Pattern Recognition and Image Processing} dataset from the University of Freiburg\footnote{https://lmb.informatik.uni-freiburg.de/resources/datasets/}, named \textsc{Birdhouse}, \textsc{Head}, \textsc{Shoe}, and \textsc{Dog} respectively. The first three groups are from the dataset \emph{Image Sequences}\footnote{https://lmb.informatik.uni-freiburg.de/resources/datasets/sequences.en.html}, in the format of .jpg images, whereas \textsc{Dog}, from \emph{Stereo Ego-Motion Dataset}\footnote{https://lmb.informatik.uni-freiburg.de/resources/datasets/StereoEgomotion.en.html}, is a video which we converted to .jpg. 

Each image group contains a sequence of photos featuring the same object and background. The sequence may feature the object's motion relative to the background or changes in the camera's shooting angle. 
Any image is only slightly different from the previous one in the sequence, and this could potentially lead to minor differences in segmentation solutions. This justifies warm-starting with the optimal flow for the max flow problem found on the previous image.

\begin{table}[ht]
\centering
\caption{Image groups' descriptions}
\label{table:data_desc}
\begin{tabular}{r|llll}
\hline
    Image Group & Object, background & Original size & Cropped size \\ 
\hline
    \textsc{Birdhouse} & wood birdhouse, backyard & 1280, 720 & 600, 600 \\
    \textsc{Head} & a person's head, buildings & 1280, 720 & 600, 600 \\
    \textsc{Shoe} & a shoe, floor and other toys & 1280, 720 & 600, 600 \\
    \textsc{Dog} & Bernese Mountain dog, lawn & 1920, 1080 & 500, 500\\
\hline
\end{tabular}
\end{table}

We take 10 images from each group, 
cropped them to be $600 \times 600$ pixels with the object included, and gray-scaled them. 
Then we resize the images to generate image sequences of different sizes. 
See Table \ref{table:data_desc} for detailed information about the image groups, the featured object/backround, and the original and cropped sizes of each image.  
See Figures \ref{fig:original_images} and \ref{fig:cropped_images} for an image instance from each group.

\begin{figure}[ht]
        \centering
        \begin{subfigure}[b]{0.25\linewidth}
            \centering
            \includegraphics[width=\linewidth]{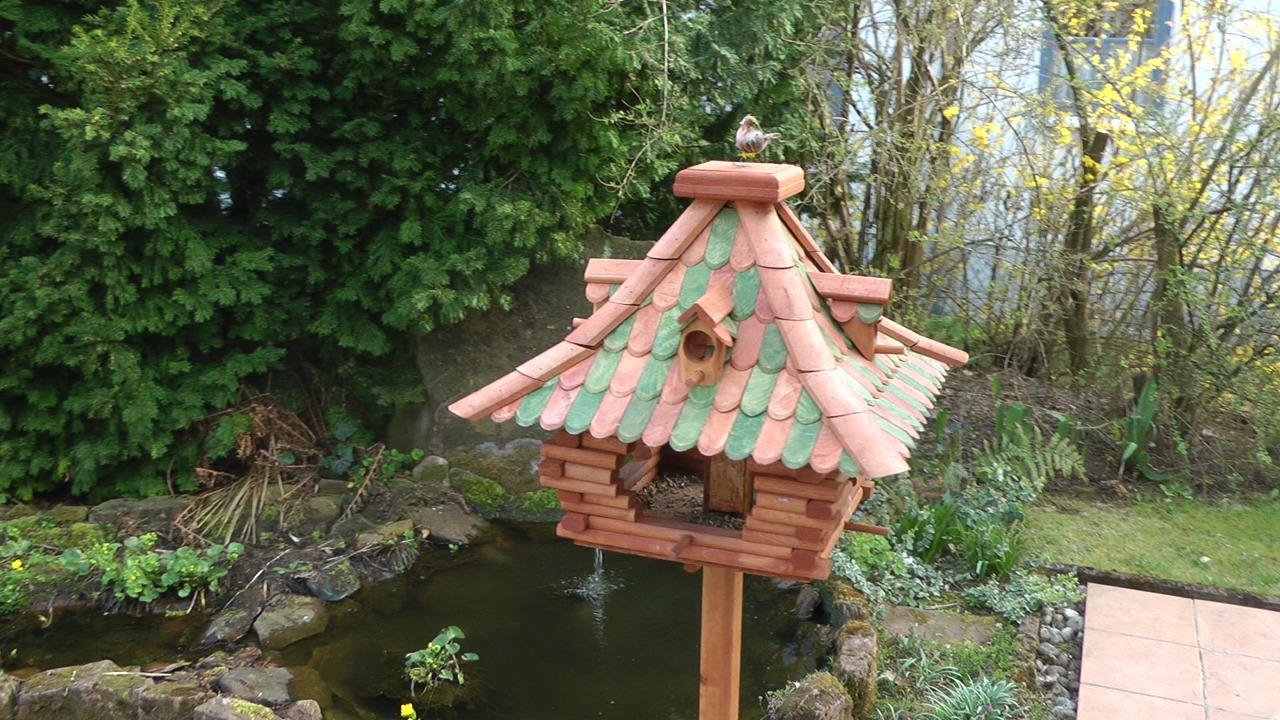}
            \caption[]%
            {{\small Birdhouse}}    
            \label{fig:birdhouse}
        \end{subfigure}
        \hspace{1cm}
        \begin{subfigure}[b]{0.25\linewidth}  
            \centering 
            \includegraphics[width=\linewidth]{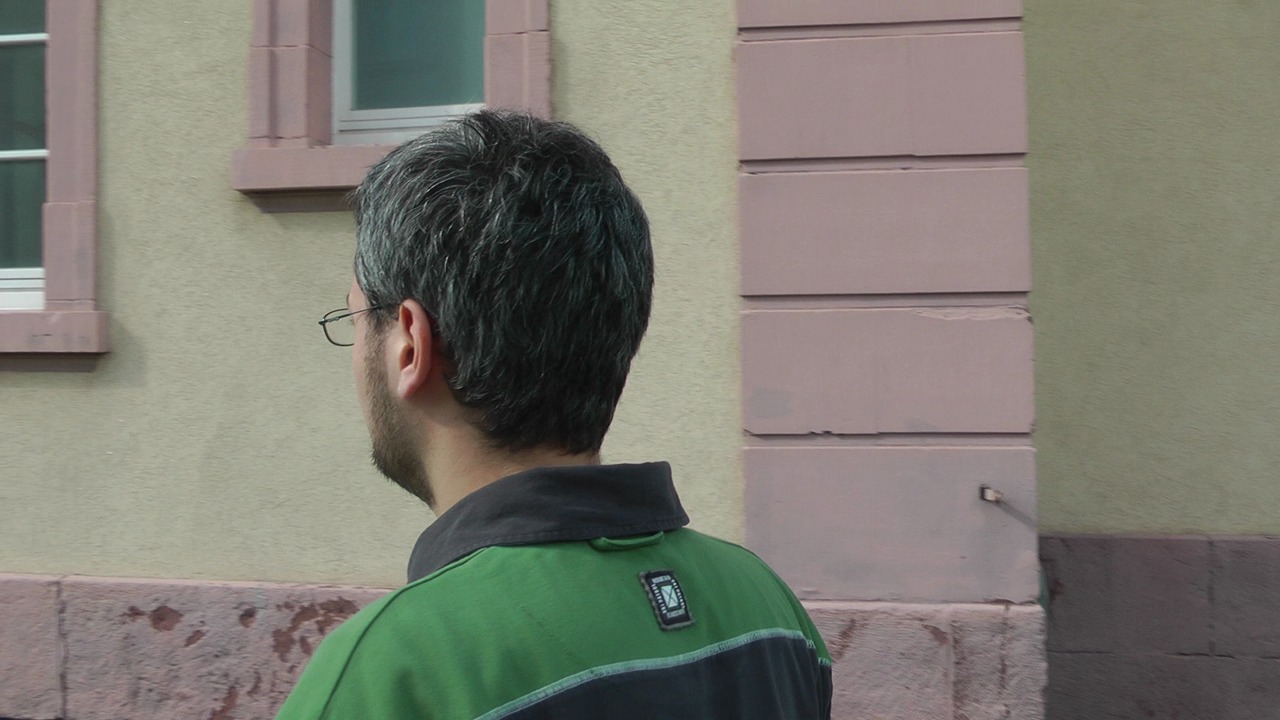}
            \caption[]%
            {{\small Head}}    
            \label{fig:head}
        \end{subfigure}
        \vskip\baselineskip
        \begin{subfigure}[b]{0.25\linewidth}   
            \centering 
            \includegraphics[width=\linewidth]{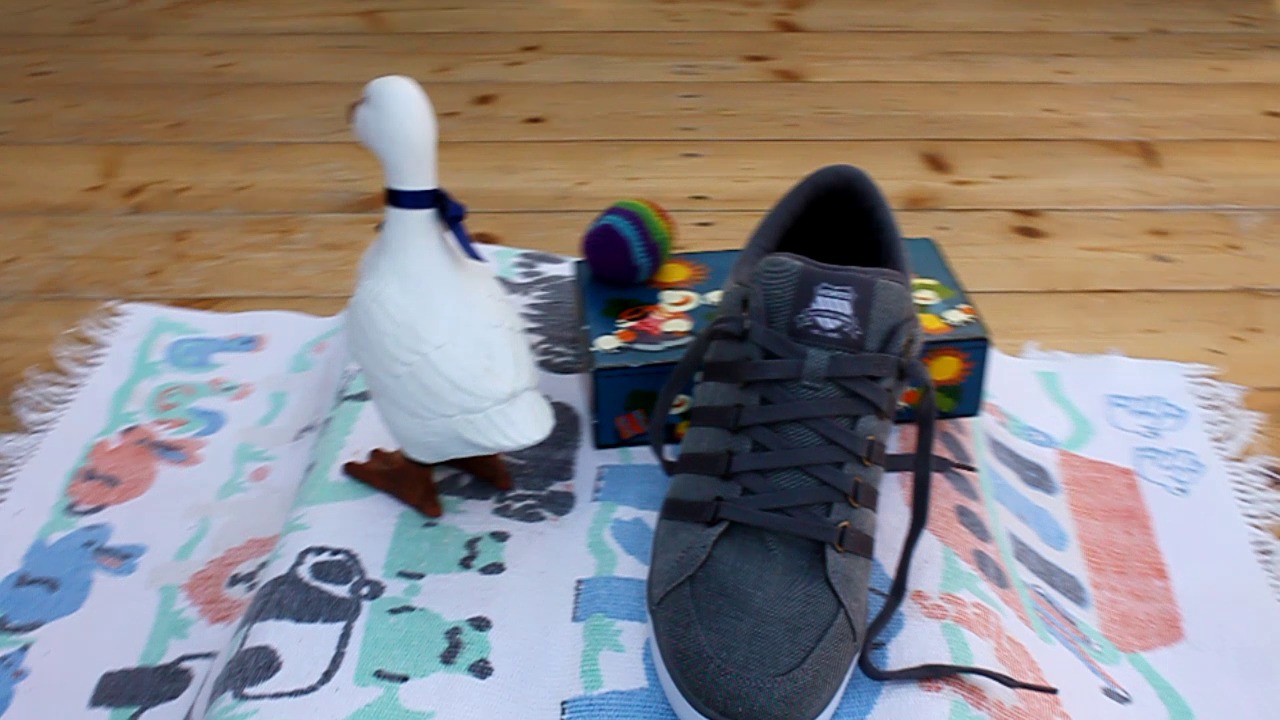}
            \caption[]%
            {{\small Shoe}}    
            \label{fig:shoe}
        \end{subfigure}
        \hspace{1 cm}
        \begin{subfigure}[b]{0.25\linewidth}   
            \centering 
            \includegraphics[width=\linewidth]{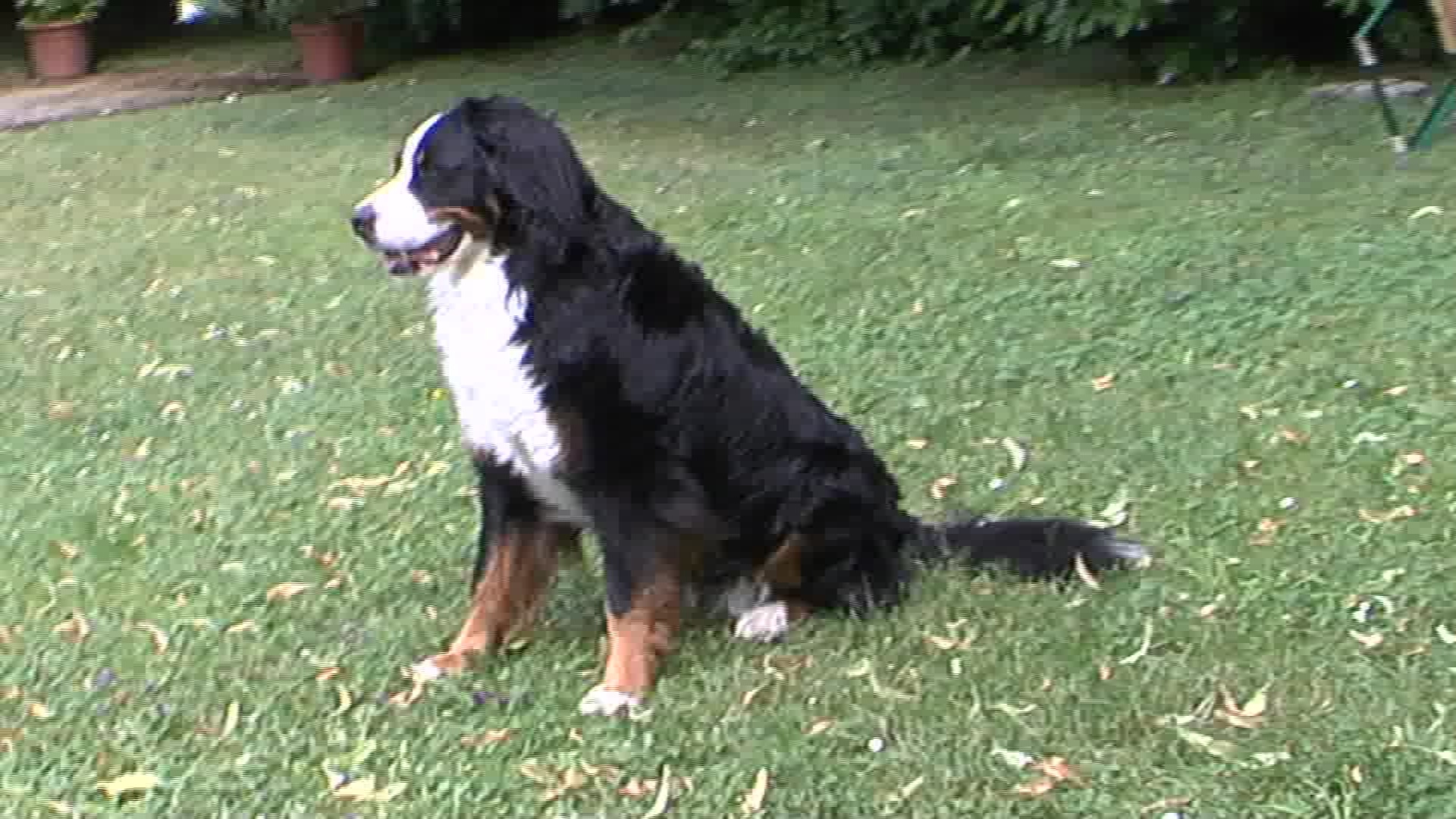}
            \caption[]%
            {{\small Dog}}    
            \label{fig:dog}
        \end{subfigure}
        \caption{\small Examples of original images in each group.} 
        \label{fig:original_images}
\end{figure}

\begin{figure}
        \centering
        \begin{subfigure}[b]{0.2\linewidth}
            \centering
            \includegraphics[width=\linewidth]{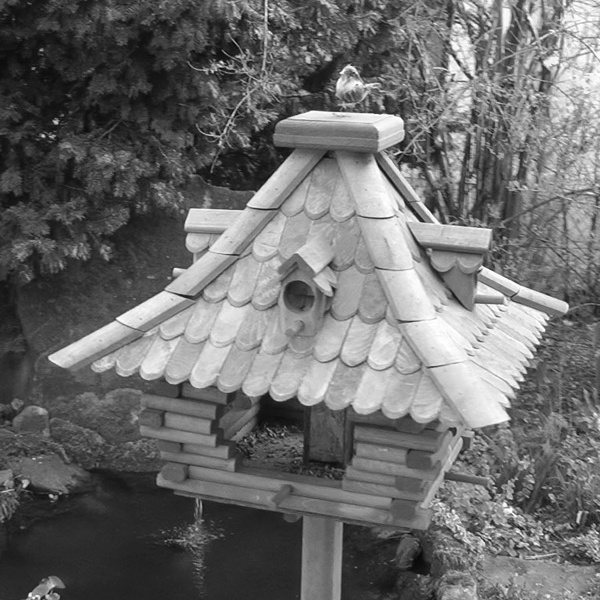}
            \caption[]%
            {{\small Birdhouse}}    
            \label{fig:birdhouse_c}
        \end{subfigure}
        \hfill
        \begin{subfigure}[b]{0.2\linewidth}  
            \centering 
            \includegraphics[width=\linewidth]{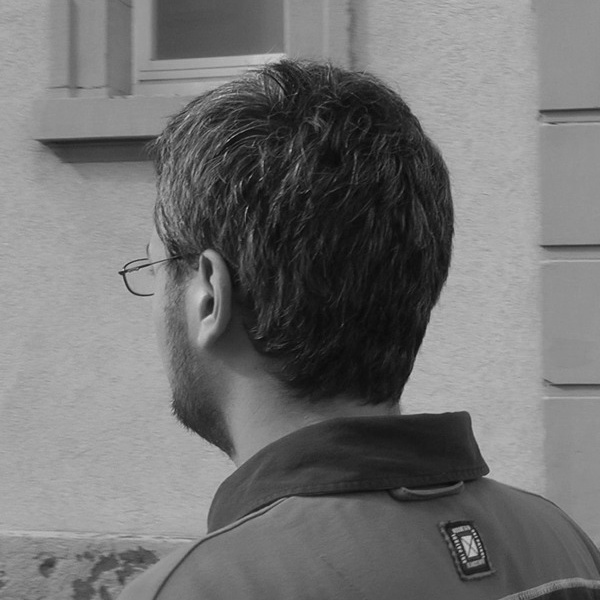}
            \caption[]%
            {{\small Head}}    
            \label{fig:head_c}
        \end{subfigure}
        \hfill
        \begin{subfigure}[b]{0.2\linewidth}   
            \centering 
            \includegraphics[width=\linewidth]{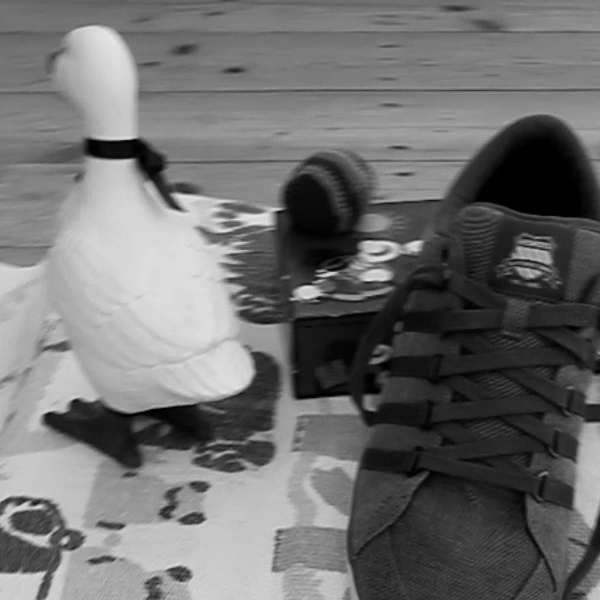}
            \caption[]%
            {{\small Shoe}}    
            \label{fig:shoe_c}
        \end{subfigure}
        \hfill
        \begin{subfigure}[b]{0.2\linewidth}   
            \centering 
            \includegraphics[width=\linewidth]{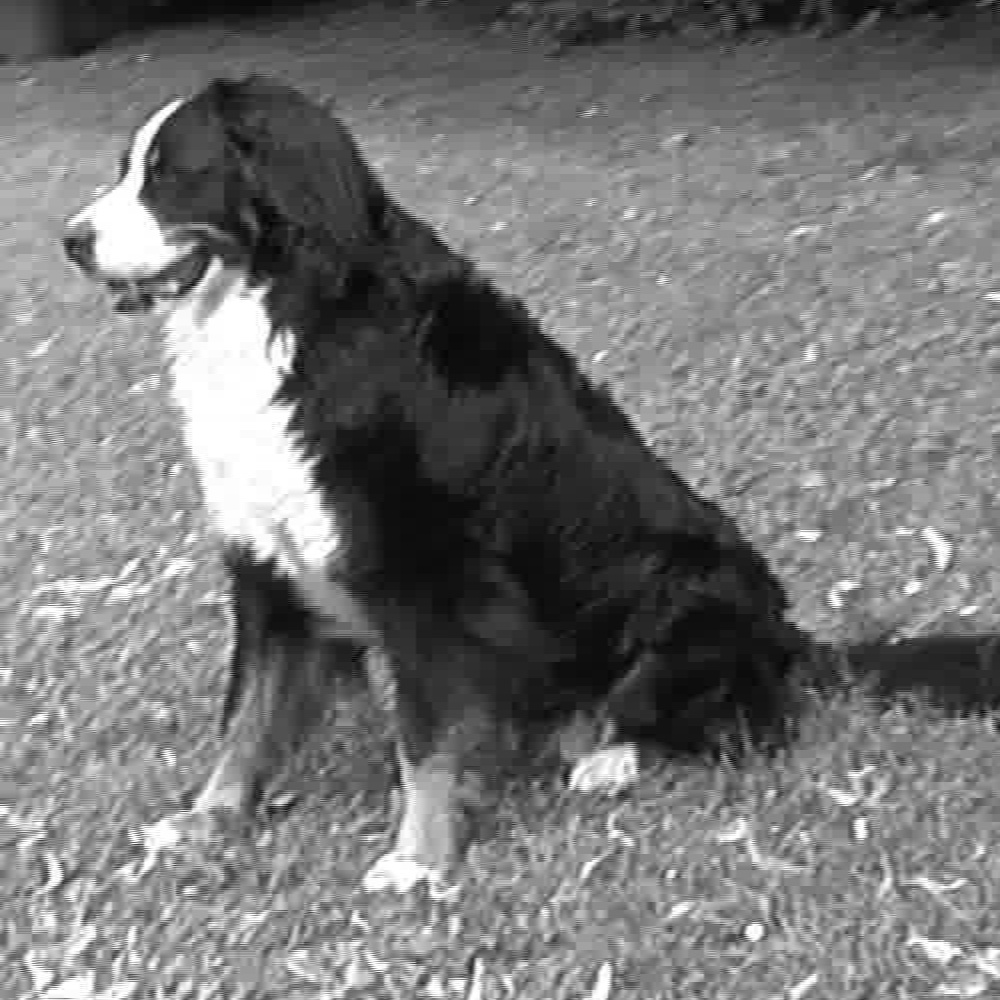}
            \caption[]%
            {{\small Dog}}    
            \label{fig:dog_c}
        \end{subfigure}
        \caption{\small Cropped, gray-scaled images in each group.} 
        \label{fig:cropped_images}
\end{figure}

\noindent \textbf{Graph construction}\quad
Following the practice in \citet{boykov2006graph}, we briefly describe how to formulate image segmentation as a max-flow/min-cut problem and how to write the boundary-based objective function. Our input is an image with pixel set $V$, along with two sets of \textbf{seeds} $\mathcal{O}, \mathcal{B}$, which are pixels predetermined to be part of the object or background, respectively (often selected by human experts), to make the segmentation task easier. Let $I_v$ denote the \textbf{intensity} (or gray scale) of pixel $v$. For any two pixels $p,q$, separating them in the object/background segmentation solution induces a \textbf{penalty} of $\beta_{p, q}$. If $p,q$ are neighboring pixels, i.e. $p$ and $q$ are either in the same column and in adjacent rows 
or same row and adjacent columns, then $\beta_{p,q}=C \exp (-\frac{(I_p - I_q)^2}{2\sigma^2})$, where $C$ is a relatively big constant scalar, otherwise it is $0$. Thus $\beta_{p,q}$ gets bigger with stronger contrast between neighboring $p$ and $q$. 

For a given solution let $J$ denote the object pixels. The \textbf{boundary-based} objective function is the summation of the penalties over all pairs of pixels:
$\max_{J}  \sum_{p \in J, q \notin J}\beta_{p,q},$ for $J$ satisfying $\mathcal{O} \subseteq J, \mathcal{B} \subseteq V \setminus J  $.
Penalties are only imposed on the object boundary. The best segmentation minimizes the total penalty, thus maximizing the contrast between the object and background across the boundary, while satisfying the constraints imposed by seeds.

This is equivalent to solving the max-flow/min-cut problem on the following graph. Let the node set be all the pixels plus two terminal nodes: the object terminal $s$ (source) and the background terminal $t$ (sink). We add the following arcs:
(1) from $s$ to every node in $\mathcal{O}$, with a huge capacity $M$;
(2) from every node in $\mathcal{B}$ to $t$, again with capacity $M$;
(3) from every pair of node $p,q \in V$ (including the seeds), both arcs $(p, q)$ and $(q, p)$ with capacity $\beta_{p, q}$.
The value $M$ should ensure that these arcs never appear in the optimal cut. The flow goes from $s$ to $t$. For an $n \times n$ pixels image, the graph is sparse with $O(n^2)$ nodes and also $O(n^2)$ arcs.

\noindent \textbf{Link to theory}\quad 
For an image sequence, the constructed graphs are a generalization of the setting in Section \ref{sec: instance-robust}. The graphs form 2-dimensional grids and share the same network structure, the only differences being the capacity vectors. In addition, Section \ref{sec: instance-robust} makes other assumptions which also translate into properties of the images. The 1 or $M$ edge capacities assumption implies an extreme contrast between the gray scales of object and background pixels. The $d$-local assumption says that from one image to the next, the new object and background pixels are geographically close, implying only minor changes in the object's shape and location. Our image sequences do not strictly satisfy these properties. However, in all of our experiments the conclusions remain robust against moderate violations of the theoretical assumptions, showing that warm-starts can be beneficial in practice beyond current theoretical limits.

\noindent \textbf{Detailed experiment settings}\quad
Each image sequence has 10 images and they share the same set of seeds, 
so the constructed graphs have the same structure.  
See Figure \ref{fig:seed_example} for seeds for \textsc{Birdhouse}. 
Starting with the second image, we reuse the old max-flow solution on the previous one and pass the flow to Algorithm \ref{alg:EK-warmstart}. During the feasibility projection,
we pick a node and keep diminishing its excess/deficit by finding a projection path and sending flow down that path, until excess/deficit is 0. As in Section \ref{sec: instance-robust}, we prioritize projection paths excluding $s$ and $t$, since these modifications preserve the overall flow value, and we only send flow back to $s$ and from $t$ when no other paths exist.

\begin{figure}
        \centering
        \begin{subfigure}[b]{0.2\linewidth}
            \centering
            \includegraphics[width=\linewidth]{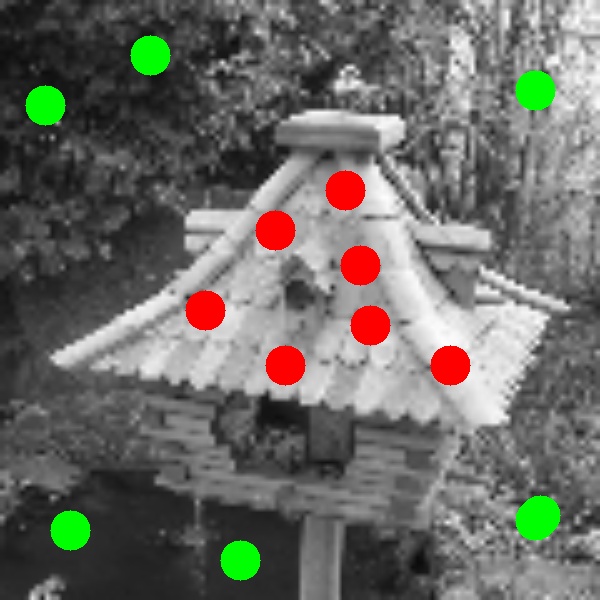}
            \caption[]%
            {{\small Image 1}}    
            \label{fig:bhseeds1}
        \end{subfigure}
       \hspace{1cm}
        \begin{subfigure}[b]{0.2\linewidth}  
            \centering 
            \includegraphics[width=\linewidth]{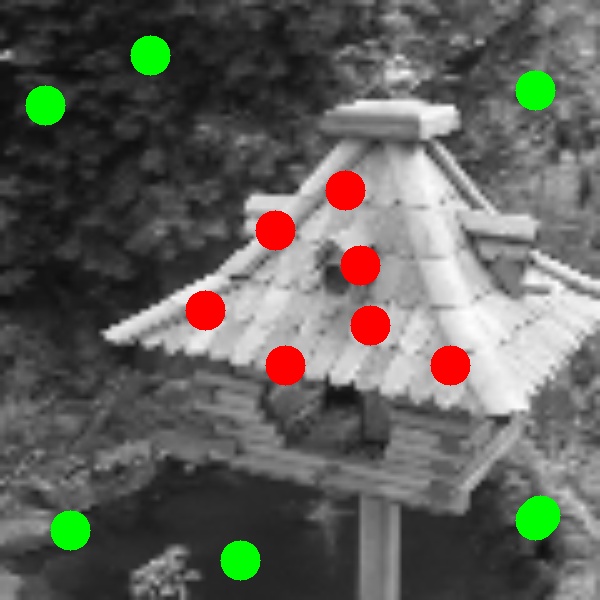}
            \caption[]%
            {{\small Image 5}}    
            \label{fig:bhseeds2}
        \end{subfigure}
        \hspace{1cm}
            \begin{subfigure}[b]{0.2\linewidth}  
            \centering 
            \includegraphics[width=\linewidth]{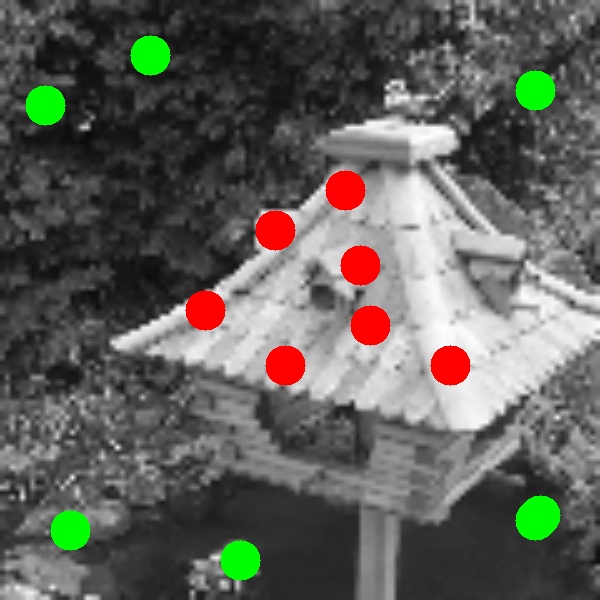}
            \caption[]%
            {{\small Image 10}}    
            \label{fig:bhseeds3}
        \end{subfigure}
        \caption{\small Seeds on the first, fifth, and last images from the $120 \times 120$ pixels \textsc{Birdhouse} sequence. Red for object, green for background.} 
        \label{fig:seed_example}
\end{figure}

\begin{figure}
        \centering
        \begin{subfigure}[b]{0.2\linewidth}
            \centering
            \includegraphics[width=\linewidth]{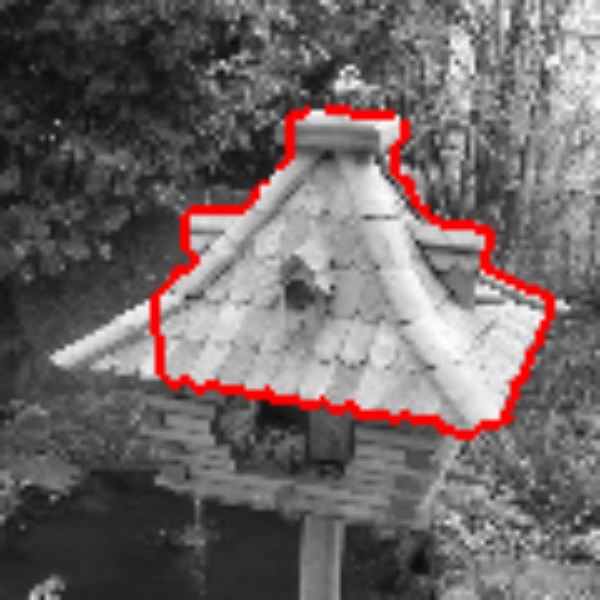}
            \caption[]%
            {{\small Image 1}}    
            \label{fig:bhcut1}
        \end{subfigure}
        \hspace{1cm}
        \begin{subfigure}[b]{0.2\linewidth}  
            \centering 
            \includegraphics[width=\linewidth]{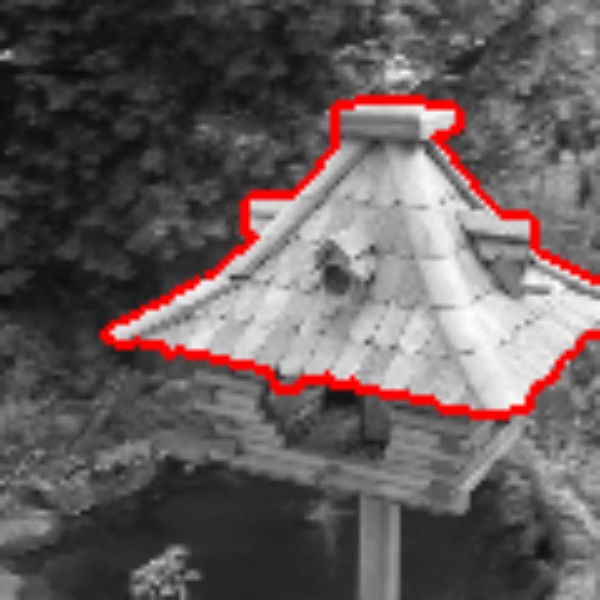}
            \caption[]%
            {{\small Image 5}}    
            \label{fig:bhcut2}
        \end{subfigure}
       \hspace{1cm}
            \begin{subfigure}[b]{0.2\linewidth}  
            \centering 
            \includegraphics[width=\linewidth]{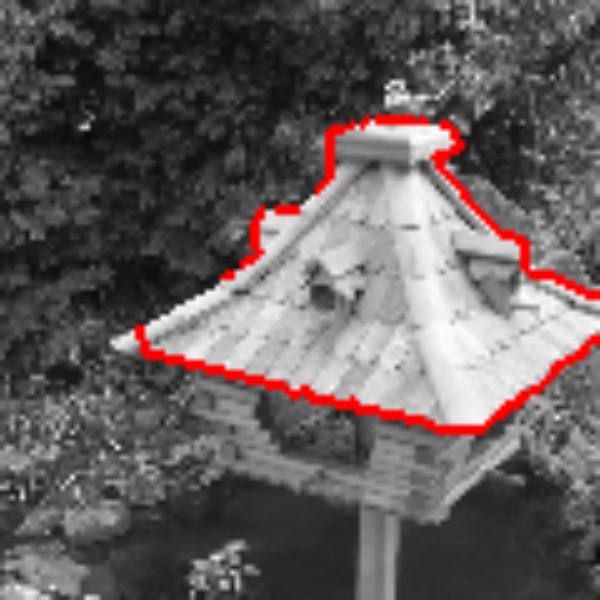}
            \caption[]%
            {{\small Image 10}}    
            \label{fig:bhcut3}
        \end{subfigure}
        \caption{\small  Cuts (red) on the first, fifth, and last images from the $120 \times 120$ pixels \textsc{Birdhouse} sequence.} 
        \label{fig:bh_cuts}
\end{figure}


We compare cold- and warm-start for both Edmonds-Karp and Dinic's algorithms. We use breadth-fist-search (BFS) to find such projection paths in 
our warm-starts for both Edmonds-Karp and Dinic's.
We use the BFS procedure for our warm-started Dinic's instead of the expected subroutine from Dinic's algorithm because the overhead of building the level graph is more  time consuming than running BFS.  This is due to the projection paths being short.


We use $n \times n$ pixel images for $n \in \{30, 60, 120\}$. 
Numerically, the $\sigma$ in the definition of $\beta_{\cdot, \cdot}$ is 50, and $C$ is $100$. 
To make the capacities integral, all $\beta_{p, q}$'s are rounded down to the nearest integer. 
Notice that $\beta_{p, q} \leq C$ by definition. 
We let $M=C|V|^2$ to make the term sufficiently large.

All experiments are run on a device with Intel(R) Core(TM) i7-7600U CPU @ 2.80GHz, with 24G memory. We record the wall-clock running time for both algorithms.
Many of the image process tools and functions are based on the \emph{Image Segmentation} Github project \citep{Julie2017}.

\begin{table}[ht]
\centering
\caption{Average running times of cold-/warm-start Ford Fulkerson and the percentage of time saved by warm-start, Edmonds-Karp}
\label{table:running_time}
\begin{tabular}{r|llll}
\hline
    Image Group & $30 \times 30$ & $60 \times 60$ & $120\times120$ \\ 
\hline
    \textsc{Birdhouse} & 0.83/0.55, 34.07\% & 8.48/3.48, 58.98\%  & 109.06/37.31, 65.78\% \\
    \textsc{Head} & 0.65/0.45 31.06\% & 9.52/4.28, 55.07\% & 112.66/31.77, 71.80\% \\
    \textsc{Shoe} & 0.72/0.46, 36.01\% & 8.81/3.04, 65.47\% & 111.05/30.44, 72.59\% \\
    \textsc{Dog} & 0.73/0.41, 42.96\% & 22.38/6.89, 69.22\%  & 202.99 / 42.04, 79.29\%\\
\hline
\end{tabular}
\end{table}

\begin{table}[ht]
\centering
\caption{Average running times of cold-/warm-start Ford Fulkerson and the percentage of time saved by warm-start, Dinic}
\label{table:running_time_dc}
\begin{tabular}{r|llll}
\hline
    Image Group & $30 \times 30$ & $60 \times 60$ & $120\times120$ \\ 
\hline
    \textsc{Birdhouse} & 0.38/0.37, 2.49\% & 5.81/3.17, 45.43\%  & 82.52/35.37, 57.14\% \\
    \textsc{Head} & 0.36/0.36 0.58\% & 7.7/4.44, 42.35\% & 149.12/49.44, 62.88\% \\
    \textsc{Shoe} & 0.39/0.37, 5.07\% & 7.01/3.35, 52.24\% & 140.52/49.33, 64.9\% \\
    \textsc{Dog} & 0.5/0.41, 10.16\% & 12.38/4.99, 59.66\%  & 206.85 / 58.98, 71.48\%\\
\hline
\end{tabular}
\end{table}

\noindent \textbf{Results}\quad
We first show that the boundary-based image segmentation approach generates reasonable cuts. 
For example, Figure \ref{fig:bh_cuts} illustrates cuts from the $120 \times 120$ \textsc{Birdhouse} sequence. See Appendix \ref{sec: appendix} for other examples.
We then compare the running time of cold- and warm-start Ford-Fulkerson. 
As all algorithms are returning optimal flows, 
there are no qualitative aspects of the solutions to measure. Tables \ref{table:running_time} and \ref{table:running_time_dc} show results in all experiments settings for Edmonds-Karp and Dinic, rows being image groups and columns image sizes. Each entry is formatted as ``cold-start time (s) / warm-start time(s), warm-start time savings (\%)''.

These results show warm-starting Ford-Fulkerson greatly improves the efficiency in all settings. Further, both cold- and warm- start's running time increases polynomially with the image width $n$, but warm-start grows slower, making it a potentially desirable approach on large scale networks. This is most obvious on image group \textsc{Dog} using Edmonds-Karp, where warm-start time is ~60\% of cold-start time on $30 \times 30$ pixels versus ~20\% on $120 \times 120$ pixels. These conclusions hold for both Edmonds-Karp and Dinic, with Dinic being slightly more efficient on smaller datasets.


Next we examine the execution of cold-/warm-start in more detail, taking the $120 \times 120$ \textsc{Birdhouse} sequence for Edmonds-Karp for example (Table \ref{table:path}). The table gives the average length of the augmenting paths (`avg length') and the average number of paths found (`avg \#') over the sequence of images. See Appendix \ref{sec: appendix} for complete data. 

\begin{table}[ht]
\centering
\caption{Comparison of projection and augmenting paths in cold- and warm-start Ford-Fulkerson, the first $5$ images from the $120 \times 120$ \textsc{Birdhouse} image sequence}
\label{table:path}
\scriptsize
\resizebox{\columnwidth}{!}{\begin{tabular}{r|llllll}
\hline Image \# & \tabincell{c}{cold-start \\ aug path \#}	& \tabincell{c}{cold-start \\ aug path \\ avg length} & \tabincell{c}{warm-start\\ proj path \#} &	\tabincell{c}{warm-start\\  proj path \\ avg length} & \tabincell{c}{warm-start\\ aug path \#} & \tabincell{c}{warm-start\\ aug path\\ avg length} \\
\hline 
1 & 2453 & 67.93 & 2105 & 9.39 & 628 & 81.48 \\
2 & 2093 & 65.22 & 3393 & 19.28 & 0 & 0 \\
3 & 2536 & 74.88 & 2038 & 9.71 & 896 & 101.731 \\
4 & 2089 & 69.09 & 3335 & 28.55 & 0 & 0 \\
5 & 1908 & 68.53 & 3226 & 22.97 & 0 & 0\\
\hline
\end{tabular}}
\end{table}

Results in Table \ref{table:path} suggest that the projected feasible flow is in general only slightly sub-optimal, which is key  for warm-start's efficiency. Max-flow on the previous image is  a good starting point for warm-start  with the feasibility projection algorithm. On average, after rounding down the previous max-flow to satisfy the new edge capacities, the total excess/deficit is $(1.75 \pm 0.44)$ \% of the real maximum flow value. Moreover,  fixing the  excess/deficit results in a near optimal flow.  Indeed, the projection quickly gives a feasible flow that recovers $(96 \pm 6)$\% of the maximum flow. 

Another  factor contributing to the efficiency of warm-start is the projection path-finding subroutine. Recall that both cold- and warm-start use the same BFS subroutine to find either an $s, t$ augmenting path or a projection path. The theory in Section \ref{sec: instance-robust} suggests that  paths in the projection step will take less time to find. To show this empirically, we collected data on the number of augmenting/feasibility projection paths found and their average lengths for both cold- and warm-start.  Overall, compared with cold-start, warm-start has shorter projection paths on average, suggesting massive savings in the BFS running time per path. While we show this for the \textsc{Birdhouse} 
images in Table \ref{table:path}, this is true on other datasets too, available in Appendix \ref{sec: appendix}. This explains the efficiency even if the  excess/deficit is large. This shows that the theoretical expectations raised in Section \ref{sec: instance-robust} are predictive of empirical performance.

\section{Conclusion}
We show how to warm-start the Ford-Fulkerson algorithm for computing flows, 
as well as prove strong theoretical results and give empirical evidence of good performance of our algorithm. 
We further refine our analysis to capture the gains due to using {\em short} projection paths to route excess flow and show that these scenarios are prevalent in image segmentation applications.

Many interesting challenges remain. 
For one, there are many known algorithms for computing flows, and it would be interesting to see if those methods can also be sped up in a similar fashion. A technical roadblock lies in handling both under- and over- predictions, particularly when predictions lead to infeasible flows. More generally, 
a network flow problem can be written as a linear program. Another direction is finding algorithms for solving general LPs that can be helped by judiciously chosen predictions.

\clearpage

\bibliographystyle{apalike}
\bibliography{refs}

\clearpage

\section{Appendix}
\label{sec: appendix}

We give a description of the experiment settings and provide more collected data and results. More can be found at \url{https://github.com/wang-yuyan/warmstart-graphcut-algorithms-pulic}.

\paragraph{More on choice of seeds and cuts} Recall that on the $10$ images from the same sequence, the seed pixels are always fixed. We note here the choice of seeds (number of seeds and their locations) affects which min-cut solution is found. 
However, as long as the seeds give a reasonable solution that is close to the real object/background boundary, the conclusions in the comparison between cold- and warm-start remain robust against a change of seeds. 

In Section \ref{sec: empirical}, we showed seeds and optimal cuts on images 1, 5, and 10 of the $120 \times 120$ pixel \textsc{Birdhouse} sequence in Figures \ref{fig:seed_example} and \ref{fig:bh_cuts}. Here we show, in addition, our seeds and resulting cuts on images 1, 5, and 10 of the other $120 \times 120$ sequences. Those of \textsc{Head} are in Figure \ref{fig:seed_cut_head}, those of \textsc{Shoe} in Figure \ref{fig:seed_cut_shoe}, 
and those of \textsc{Dog} in Figure \ref{fig:seed_cut_dog}.

\begin{figure}[ht]
        \centering
        \begin{subfigure}[b]{0.15\linewidth}
            \centering
            \includegraphics[width=\linewidth]{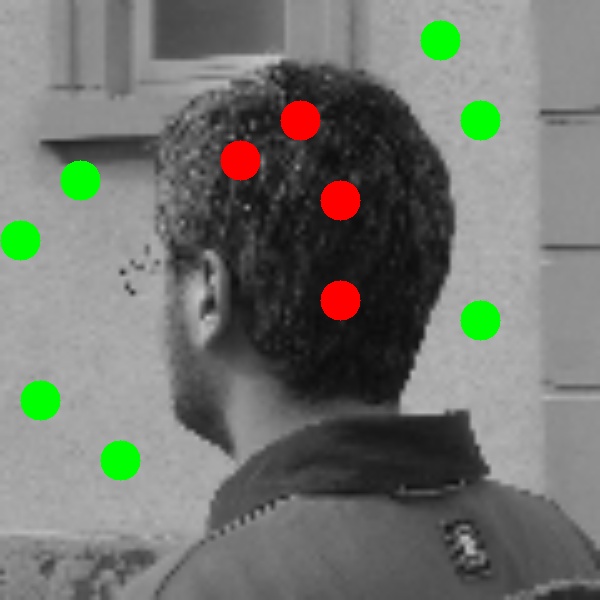}
            \caption[]%
            {{\footnotesize Image 1, seeds}}    
            \label{fig:headseed1}
        \end{subfigure}
        \hfill
        \begin{subfigure}[b]{0.15\linewidth}  
            \centering 
            \includegraphics[width=\linewidth]{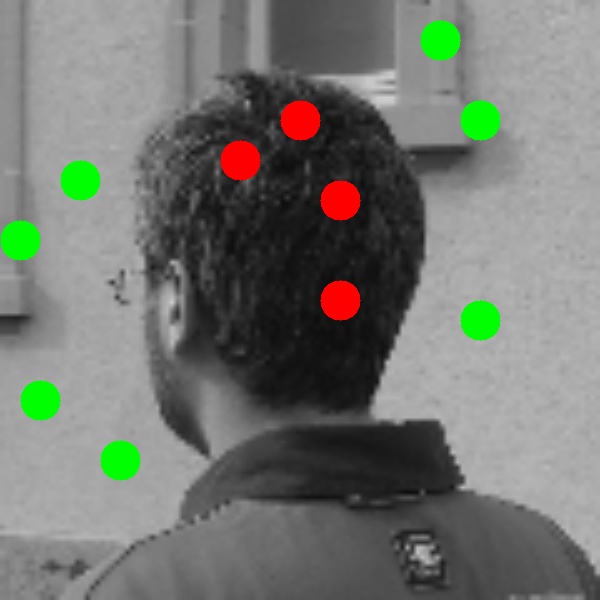}
            \caption[]%
            {{\footnotesize Image 5, seeds}}    
            \label{fig:headseed2}
        \end{subfigure}
        \hfill
            \begin{subfigure}[b]{0.15\linewidth}  
            \centering 
            \includegraphics[width=\linewidth]{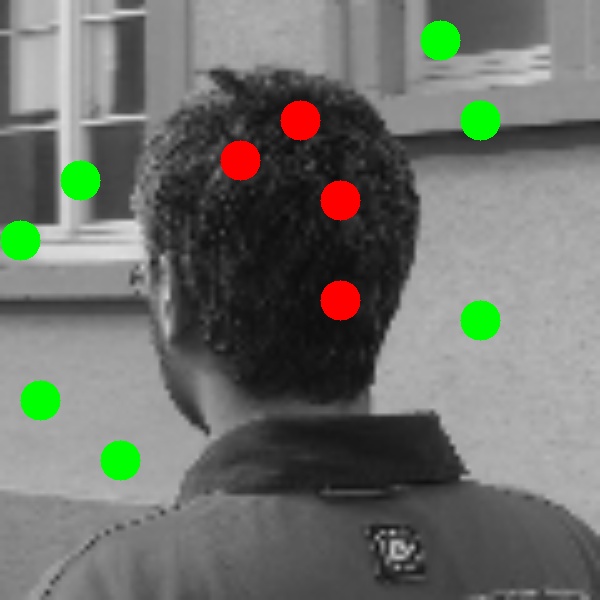}
            \caption[]%
            {{\footnotesize Image 10, seeds}}    
            \label{fig:headseed3}
        \end{subfigure}
        \hfill
        \begin{subfigure}[b]{0.15\linewidth}
            \centering
            \includegraphics[width=\linewidth]{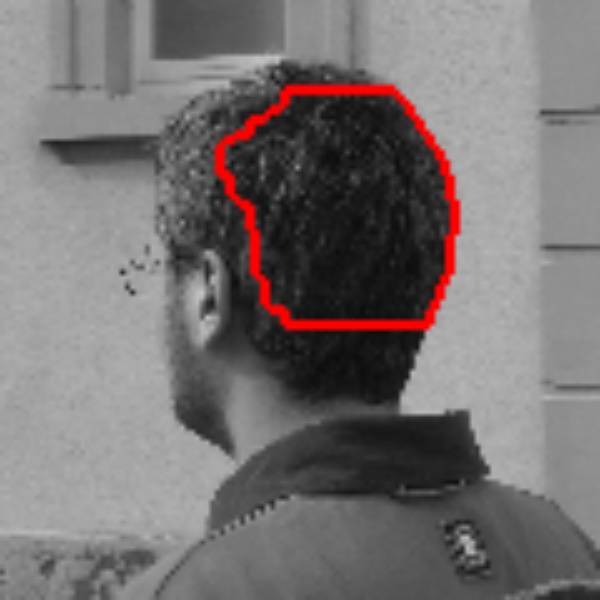}
            \caption[]%
            {{\footnotesize Image 1, cut}}    
            \label{fig:headcut1}
        \end{subfigure}
        \hfill
        \begin{subfigure}[b]{0.15\linewidth}  
            \centering 
            \includegraphics[width=\linewidth]{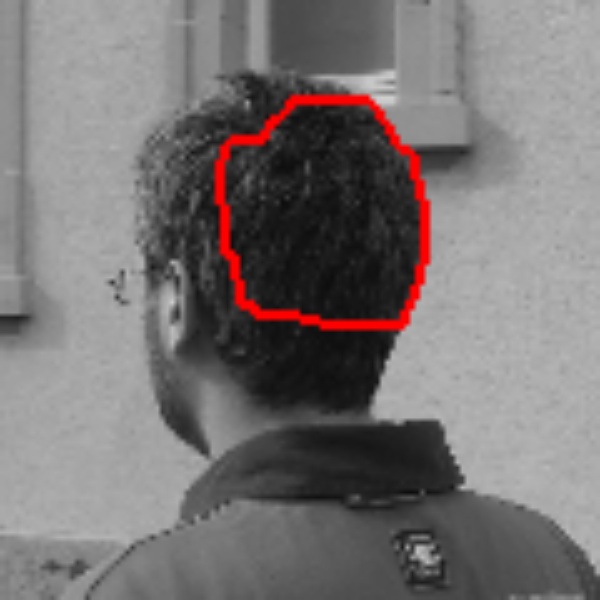}
            \caption[]%
            {{\footnotesize Image 5, cut}}    
            \label{fig:headcut2}
        \end{subfigure}
        \hfill
            \begin{subfigure}[b]{0.15\linewidth}  
            \centering 
            \includegraphics[width=\linewidth]{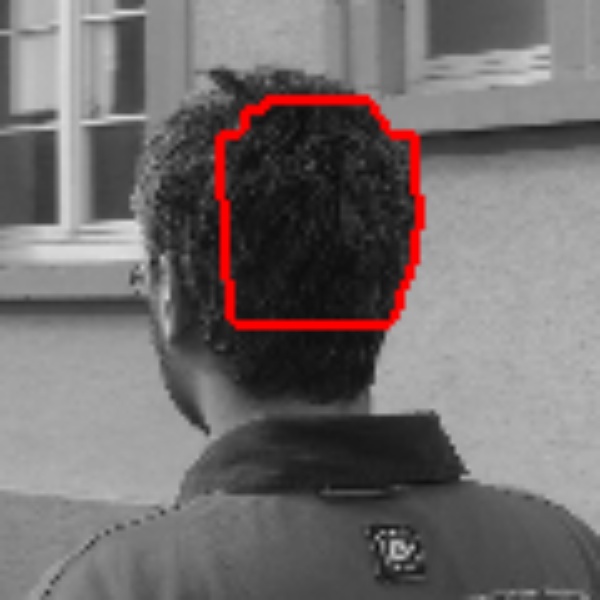}
            \caption[]%
            {{\footnotesize Image 10, cut}}    
            \label{fig:headcut3}
        \end{subfigure}
        \caption{\small Seeds and resulting cuts on the first, fifth, and last images from the $120 \times 120$ pixels \textsc{head} sequence. Red seeds for object, green seeds for background, red line for cut.} 
        \label{fig:seed_cut_head}
        \vspace{-0.1in}
\end{figure}

\begin{figure}[ht]
        \centering
        \begin{subfigure}[b]{0.15\linewidth}
            \centering
            \includegraphics[width=\linewidth]{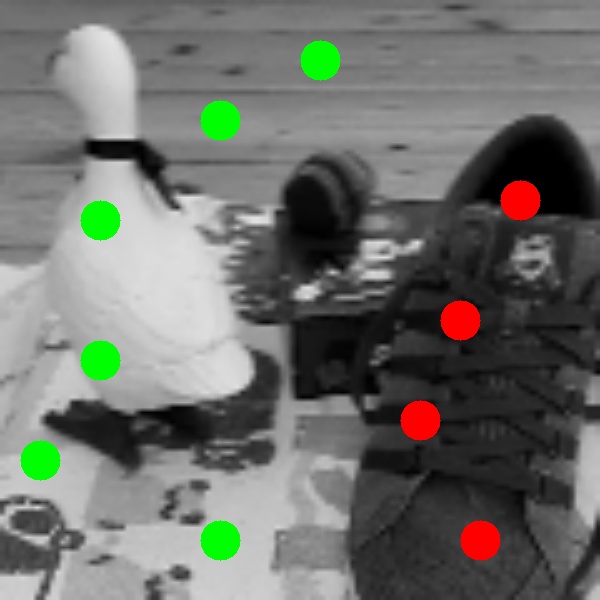}
            \caption[]%
            {{\footnotesize Image 1, seeds}}    
            \label{fig:shoeseed1}
        \end{subfigure}
        \hfill
        \begin{subfigure}[b]{0.15\linewidth}  
            \centering 
            \includegraphics[width=\linewidth]{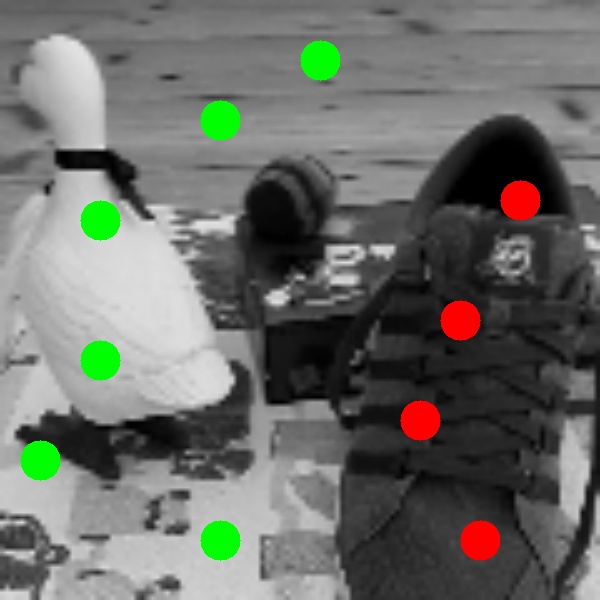}
            \caption[]%
            {{\footnotesize Image 5, seeds}}    
            \label{fig:shoeseed2}
        \end{subfigure}
        \hfill
            \begin{subfigure}[b]{0.15\linewidth}  
            \centering 
            \includegraphics[width=\linewidth]{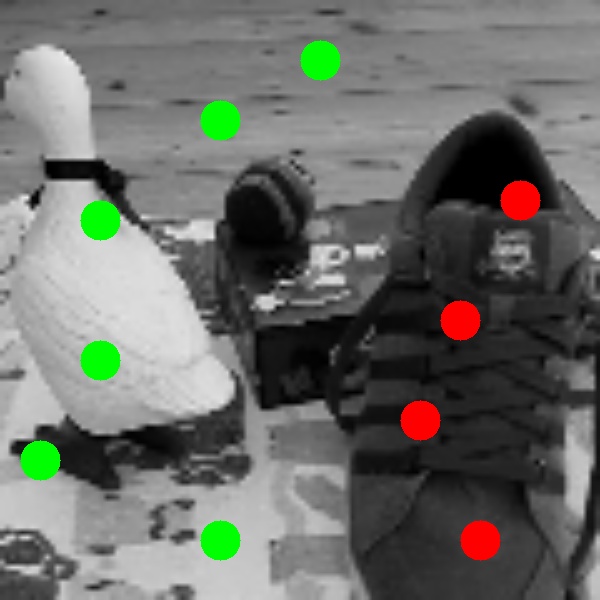}
            \caption[]%
            {{\footnotesize Image 10, seeds}}    
            \label{fig:shoeseed3}
        \end{subfigure}
        \hfill
        \begin{subfigure}[b]{0.15\linewidth}
            \centering
            \includegraphics[width=\linewidth]{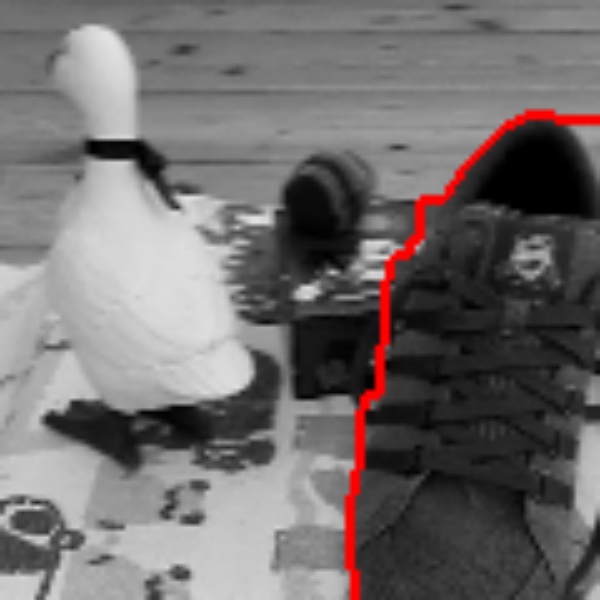}
            \caption[]%
            {{\footnotesize Image 1, cut}}    
            \label{fig:shoecut1}
        \end{subfigure}
        \hfill
        \begin{subfigure}[b]{0.15\linewidth}  
            \centering 
            \includegraphics[width=\linewidth]{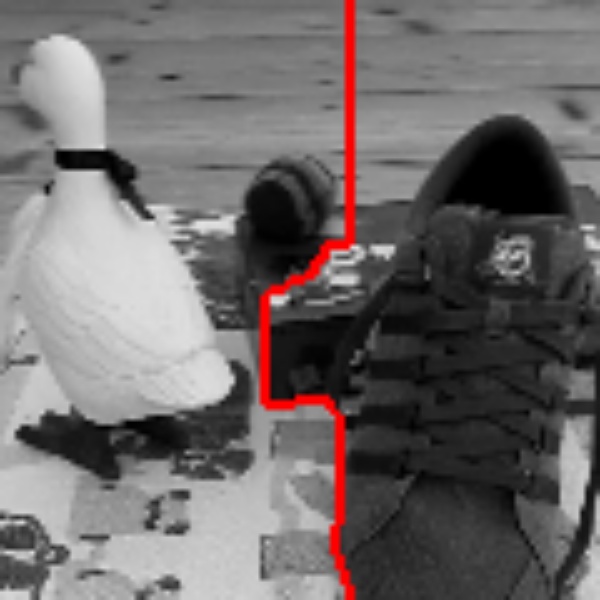}
            \caption[]%
            {{\footnotesize Image 5, cut}}    
            \label{fig:shoecut2}
        \end{subfigure}
        \hfill
            \begin{subfigure}[b]{0.15\linewidth}  
            \centering 
            \includegraphics[width=\linewidth]{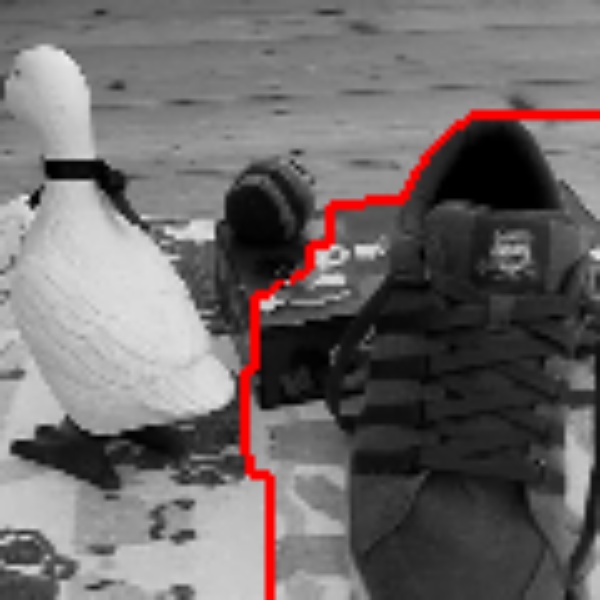}
            \caption[]%
            {{\footnotesize Image 10, cut}}    
            \label{fig:shoecut3}
        \end{subfigure}
        \caption{\small Seeds and resulting cuts on the first, fifth, and last images from the $120 \times 120$ pixels \textsc{shoe} sequence. Red seeds for object, green seeds for background, red line for cut.} 
        \label{fig:seed_cut_shoe}
        \vspace{-0.1in}
\end{figure}

\begin{figure}[ht]
        \centering
        \begin{subfigure}[b]{0.15\linewidth}
            \centering
            \includegraphics[width=\linewidth]{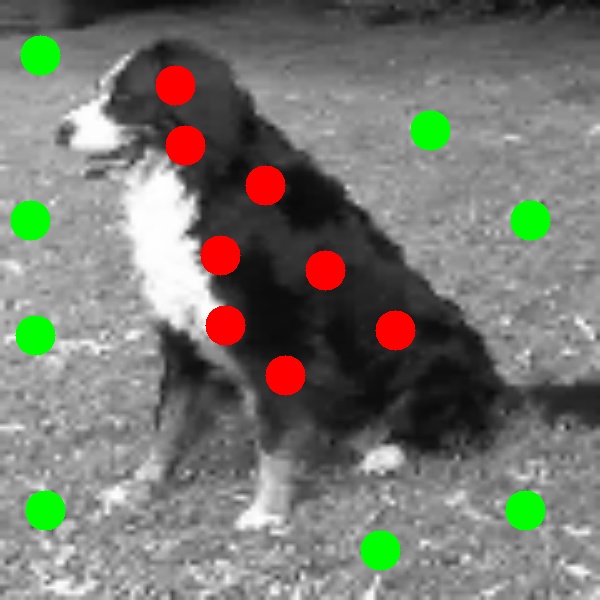}
            \caption[]%
            {{\footnotesize Image 1, seeds}}    
            \label{fig:dogseed1}
        \end{subfigure}
        \hfill
        \begin{subfigure}[b]{0.15\linewidth}  
            \centering 
            \includegraphics[width=\linewidth]{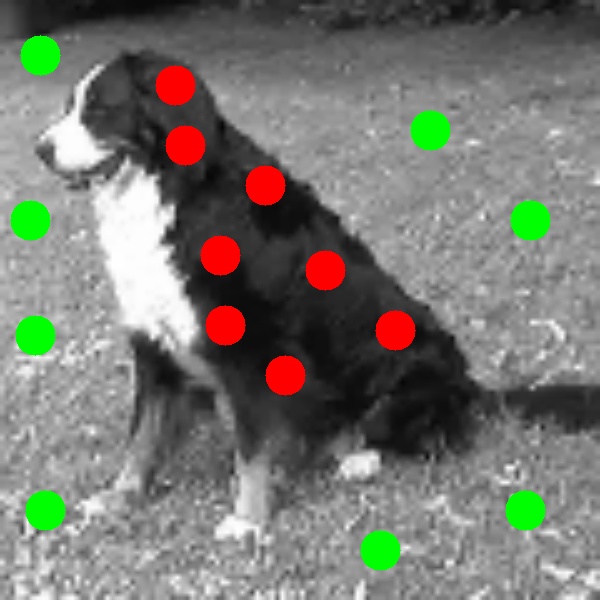}
            \caption[]%
            {{\footnotesize Image 5, seeds}}    
            \label{fig:dogseed2}
        \end{subfigure}
        \hfill
            \begin{subfigure}[b]{0.15\linewidth}  
            \centering 
            \includegraphics[width=\linewidth]{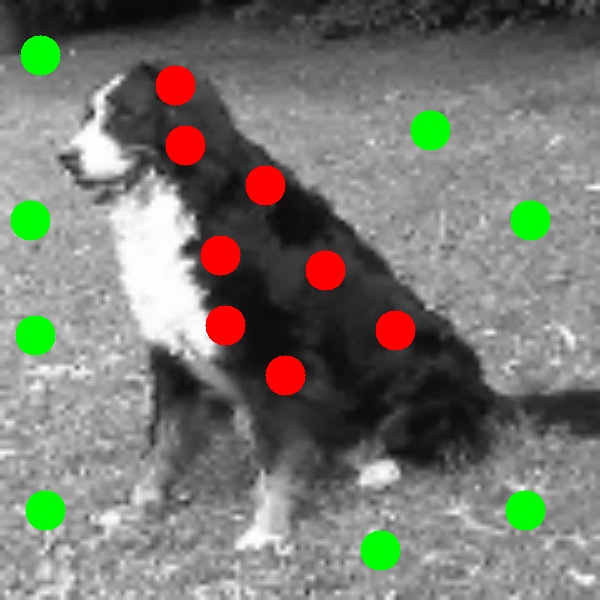}
            \caption[]%
            {{\footnotesize Image 10, seeds}}    
            \label{fig:dogseed3}
        \end{subfigure}
        \hfill
        \begin{subfigure}[b]{0.15\linewidth}
            \centering
            \includegraphics[width=\linewidth]{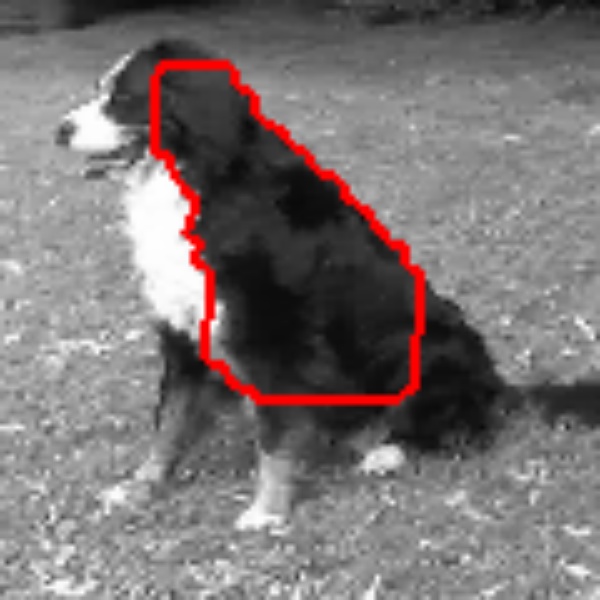}
            \caption[]%
            {{\footnotesize Image 1, cut}}    
            \label{fig:dogcut1}
        \end{subfigure}
        \hfill
        \begin{subfigure}[b]{0.15\linewidth}  
            \centering 
            \includegraphics[width=\linewidth]{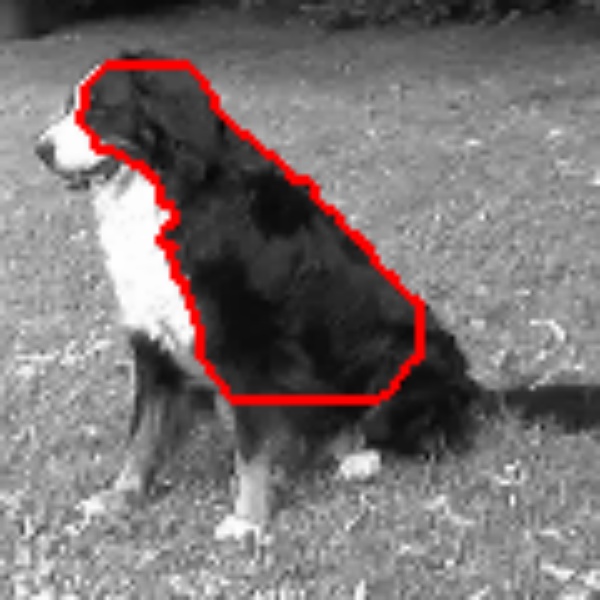}
            \caption[]%
            {{\footnotesize Image 5, cut}}    
            \label{fig:dogcut2}
        \end{subfigure}
        \hfill
            \begin{subfigure}[b]{0.15\linewidth}  
            \centering 
            \includegraphics[width=\linewidth]{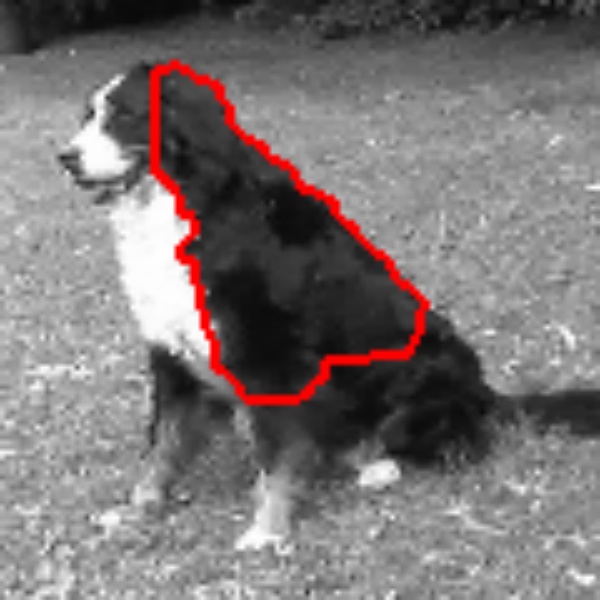}
            \caption[]%
            {{\footnotesize Image 10, cut}}    
            \label{fig:dogcut3}
        \end{subfigure}
        \caption{\small Seeds and resulting cuts on the first, fifth, and last images from the $120 \times 120$ pixels \textsc{dog} sequence. Red seeds for object, green seeds for background, red line for cut.} 
        \label{fig:seed_cut_dog}
        \vspace{-0.1in}
\end{figure}

When we select a seed, we draw a two-dimensional ball around the target seed and let every pixel in this ball be a seed as well. We found this practice to work better than simply choosing individual pixels as seeds. When we switch from low-resolution ($30 \times 30$) to high-resolution ($120 \times 120$) images, we rescale the radius of this ball proportional to the number of pixels on each side. On the $30 \times 30$, $60 \times 60$, $120 \times 120$ pixel images, the ball's radius is $1$, $2$ and $4$ pixels, respectively. In other words, if we stretch/compress the images of different resolution to be the same size, the ball will roughly have the same area geometrically. We also found this to be more effective than fixing the pixel radius, despite the change in resolution.

\begin{figure}[ht]
        \centering
        \begin{subfigure}[b]{0.15\linewidth}
            \centering
            \includegraphics[width=\linewidth]{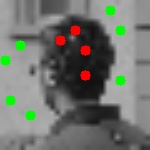}
            \caption[]%
            {{\small $30 \times 30$}}    
            \label{fig:headseed30}
        \end{subfigure}
        \hspace{1cm}
        \begin{subfigure}[b]{0.15\linewidth}  
            \centering 
            \includegraphics[width=\linewidth]{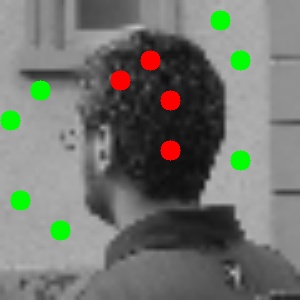}
            \caption[]%
            {{\small $60 \times 60$}}    
            \label{fig:headseed60}
        \end{subfigure}
        \hspace{1cm}
            \begin{subfigure}[b]{0.15\linewidth}  
            \centering 
            \includegraphics[width=\linewidth]{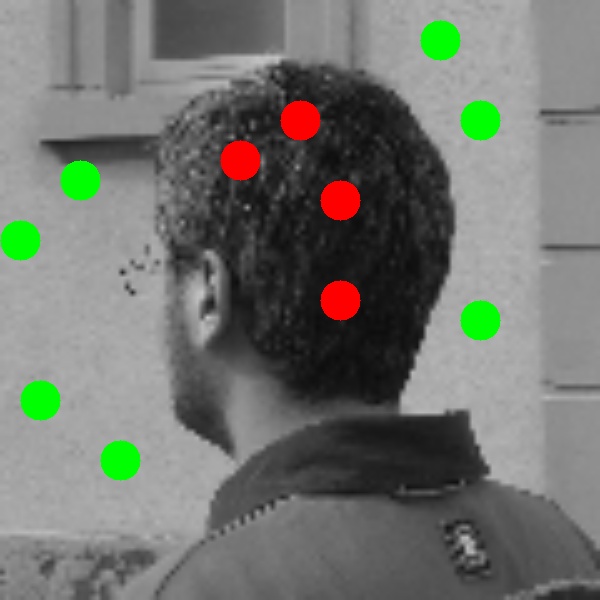}
            \caption[]%
            {{\small $120 \times 120$}}    
            \label{fig:headseed120}
        \end{subfigure}
        \caption{\small Area that each seed covers on the same image with different resolutions. } 
        \label{fig:head_seed_sizes}
        \vspace{-0.1in}
\end{figure}

Note that although the location of the seeds remains unchanged throughout an image sequence,  we may still need to provide more seeds when we switch from low- to high-resolution images. Intuitively, blurring the image lessens the minor contrast of pixels within the object and makes the geometric shape easier to capture.
The seeds and min-cut results on the $30 \times 30$ and $60 \times 60$ sequences can be found in the code directory uploaded in the github linked at the beginning of this section.

\paragraph{More on the warm-start magic} In the main body we gave evidence---both theoretically and empirically---that the savings in the run-time of warm-start is mostly due to:
\begin{itemize}
\item The algorithm's ability to use short projection paths to re-route excess flow to nodes with deficit flow, thus projecting the predicted flow to a feasible one quickly.
\item An only slightly sub-optimal flow after the feasibility projection, so that warm-start takes fewer augmenting paths to reach an optimal flow.
\end{itemize}

Here we provide more results in support of these two claims.
To show the level of total excess/deficit (whichever one is larger) and the flow value after the feasibility projection step, we show two ratios: total excess/deificit over max-flow (Table \ref{table:excess_ratio}), and feasible flow value over max-flow (Table \ref{table:flow_ratio}). One can see that typically the total excess/deficit is not negligible. In fact they are quite high and if the algorithm does not resolve excesses/deficits in the right way (such as sending all excess to the source) it could cause the flow value to diminish a lot. Our feasibility projection makes good decisions about using projection paths to make up for excess/deficit, so that it outputs a feasible flow with almost optimal flow value.

\begin{table}[ht]
\centering
\caption{Average ratio of total excess/deficit over max-flow value in warm-start}
\label{table:excess_ratio}
\footnotesize
\begin{tabular}{r|llll}
\hline
    Image Group & $30 \times 30$ & $60 \times 60$ & $90\times90$ \\ 
\hline
    \textsc{Birdhouse} & 1.06 $\pm$ 0.22 & 1.60 $\pm$ 0.21 & 1.75 $\pm$ 0.44 \\
    \textsc{Head} & 0.49 $\pm$ 0.12 & 0.6 $\pm$ 0.12 & 0.74 $\pm$ 0.1 \\
    \textsc{Shoe} & 0.49 $\pm$ 0.13 & 0.66 $\pm$ 0.08 & 0.95 $\pm$ 0.14\\
    \textsc{Dog} & 0.55 $\pm$ 0.07 & 0.8 $\pm$ 0.07 & 1.08 $\pm$ 0.19 \\
\hline
\end{tabular}
\end{table}

\begin{table}[ht]
\centering
\caption{Average ratio of flow value after feasibility projection over max-flow value  in warm-start}
\label{table:flow_ratio}
\footnotesize
\begin{tabular}{r|llll}
\hline
    Image Group & $30 \times 30$ & $60 \times 60$ & $90\times90$ \\ 
\hline
    \textsc{Birdhouse} & 0.94 $\pm$ 0.09 & 0.98 $\pm$ 0.03 & 0.96 $\pm$ 0.06 \\
    \textsc{Head} & 0.98 $\pm$ 0.03 & 0.98 $\pm$ 0.03 & 0.99 $\pm$ 0.01 \\
    \textsc{Shoe} & 0.98 $\pm$ 0.02 & 0.98 $\pm$ 0.03 & 0.98 $\pm$ 0.02\\
    \textsc{Dog} & 0.97 $\pm$ 0.04 & 0.97 $\pm$ 0.03 & 0.98 $\pm$ 0.03 \\
\hline
\end{tabular}
\end{table}

To show that the conclusion of projection paths being short broadly holds for all image groups, we give the average length of the augmenting and projection paths (`avg length') and the number of paths found (`aug path \#' and `proj path \#') over the first 5 images in the sequence for the $120 \times 120$ \textsc{Head} sequence in Table \ref{table:path-head}, 
the $120 \times 120$ \textsc{Shoe} sequence in Table \ref{table:path-shoe}, 
and the $120 \times 120$ 
\textsc{Dog} sequence in Table \ref{table:path-dog}.
Note the analogous table for the $120 \times 120$ \textsc{Birdhouse} 
sequence (Table \ref{table:path}) is in Section \ref{sec: empirical}.

\begin{table}[h!]
\centering
\caption{Comparison of projection and augmenting paths in cold- and warm-start Ford-Fulkerson, the first $5$ images from the $120 \times 120$ \textsc{Head} image sequence}
\label{table:path-head}
\footnotesize
\begin{tabular}{r|llllll}
\hline Image \# & \tabincell{c}{cold-start \\ aug path \#}	& \tabincell{c}{cold-start \\ aug path \\ avg length} & \tabincell{c}{warm-start\\ proj path \#} &	\tabincell{c}{warm-start\\  proj path \\ avg length} & \tabincell{c}{warm-start\\ aug path \#} & \tabincell{c}{warm-start\\ aug path\\ avg length} \\
\hline 
1 & 2714 & 82.65 & 2573 & 15.93 & 221 & 80.42 \\
2 & 2687 & 82.74 & 2512 & 20.40 & 217 & 135.68 \\
3 & 2475 & 76.63 & 2667 & 19.78 & 0 & 0 \\
4 & 2379 & 76.44 & 2140 & 17.00 & 0 & 0 \\
5 & 2349 & 75.66 & 2260 & 19.97 & 112 & 138.14\\
\hline
\end{tabular}
\end{table}

\begin{table}[h!]
\centering
\caption{Comparison of projection and augmenting paths in cold- and warm-start Ford-Fulkerson, the first $5$ images from the $120 \times 120$ \textsc{Shoe} image sequence}
\label{table:path-shoe}
\footnotesize
\begin{tabular}{r|llllll}
\hline Image \# & \tabincell{c}{cold-start \\ aug path \#}	& \tabincell{c}{cold-start \\ aug path \\ avg length} & \tabincell{c}{warm-start\\ proj path \#} &	\tabincell{c}{warm-start\\  proj path \\ avg length} & \tabincell{c}{warm-start\\ aug path \#} & \tabincell{c}{warm-start\\ aug path\\ avg length} \\
\hline 
1 & 1948 & 89.23 & 2252 & 22.70 & 0 & 0 \\
2 & 2081 & 91.67 & 1992 & 16.54 & 112 & 148.41 \\
3 & 2039 & 93.88 & 1936 & 14.91 & 177 & 142.51 \\
4 & 2110 & 101.97 & 2525 & 35.04 & 0 & 0 \\
5 & 2016 & 93.68 & 2375 & 18.60 & 0 & 0\\
\hline
\end{tabular}
\end{table}

\begin{table}[h!]
\centering
\caption{Comparison of projection and augmenting paths in cold- and warm-start Ford-Fulkerson, the first $5$ images from the $120 \times 120$ \textsc{Dog} image sequence}
\label{table:path-dog}
\footnotesize
\begin{tabular}{r|llllll}
\hline Image \# & \tabincell{c}{cold-start \\ aug path \#}	& \tabincell{c}{cold-start \\ aug path \\ avg length} & \tabincell{c}{warm-start\\ proj path \#} &	\tabincell{c}{warm-start\\  proj path \\ avg length} & \tabincell{c}{warm-start\\ aug path \#} & \tabincell{c}{warm-start\\ aug path\\ avg length} \\
\hline 
1 & 3314 & 63.04 & 3684 & 12.51 & 0 & 0 \\
2 & 3200 & 65.56 & 4611 & 21.69 & 0 & 0 \\
3 & 3138 & 63.53 & 3515 & 12.30 & 0 & 0 \\
4 & 3259 & 66.61 & 3270 & 10.74 & 444 & 87.08 \\
5 & 3120 & 64.43 & 3932 & 12.63 & 0 & 0\\
\hline
\end{tabular}
\end{table}

Further, we show the equivalence of these tables for the other two image sizes/resolutions, $30 \times 30$ and $60 \times 60$, for image groups \textsc{head} (Table \ref{table:path-30-head} and \ref{table:path-60-head}) and \textsc{shoe} (Table \ref{table:path-30-shoe} and \ref{table:path-60-shoe}). For these two groups, sequences of all three sizes share the same location of seeds. One can see that, the average length of the augmenting paths in cold-start Ford-Fulkerson grows roughly proportional to the width of the image. The average length of the projection paths during the warm-start feasibility projection also grows as the width of the image grows, but slightly slower than the former. This could potentially cause warm-start to be more advantageous on high-resolution images.

The omitted data tables and other experiment results can be found in the uploaded program directory (see the README.md file in the linked github repository for instructions.
\begin{table}[h]
\centering
\caption{Comparison of projection and augmenting paths in cold- and warm-start Ford-Fulkerson, the first $5$ images from the $30 \times 30$ \textsc{Head} image sequence}
\label{table:path-30-head}
\footnotesize
\begin{tabular}{r|llllll}
\hline Image \# & \tabincell{c}{cold-start \\ aug path \#}	& \tabincell{c}{cold-start \\ aug path \\ avg length} & \tabincell{c}{warm-start\\ proj path \#} &	\tabincell{c}{warm-start\\  proj path \\ avg length} & \tabincell{c}{warm-start\\ aug path \#} & \tabincell{c}{warm-start\\ aug path\\ avg length} \\
\hline 
1 & 267 & 25.16 & 226 & 8.61 & 61 & 40.72 \\
2 & 244 & 23.26 & 254 & 11.63 & 3 & 44.33 \\
3 & 253 & 22.11 & 236 & 12.05 & 0 & 0 \\
4 & 248 & 21.45 & 238 & 11.17 & 0 & 0 \\
5 & 250 & 22.98 & 252 & 12.24 & 10 & 43.30\\
\hline
\end{tabular}
\end{table}

\begin{table}[h!]
\centering
\caption{Comparison of projection and augmenting paths in cold- and warm-start Ford-Fulkerson, the first $5$ images from the $60 \times 60$ \textsc{Head} image sequence}
\label{table:path-60-head}
\footnotesize
\begin{tabular}{r|llllll}
\hline Image \# & \tabincell{c}{cold-start \\ aug path \#}	& \tabincell{c}{cold-start \\ aug path \\ avg length} & \tabincell{c}{warm-start\\ proj path \#} &	\tabincell{c}{warm-start\\  proj path \\ avg length} & \tabincell{c}{warm-start\\ aug path \#} & \tabincell{c}{warm-start\\ aug path\\ avg length} \\
\hline 
1 & 789 & 46.52 & 674 & 10.36 & 164 & 56.57 \\
2 & 852 & 44.17 & 763 & 9.69 & 99 & 62.59 \\
3 & 752 & 41.09 & 866 & 14.71 & 0 & 0 \\
4 & 782 & 40.19 & 567 & 7.52 & 169 & 48.0 \\
5 & 777 & 42.62 & 931 & 16.67 & 0 & 0\\
\hline
\end{tabular}
\end{table}

\begin{table}[h!]
\centering
\caption{Comparison of projection and augmenting paths in cold- and warm-start Ford-Fulkerson, the first $5$ images from the $30 \times 30$ \textsc{Shoe} image sequence}
\label{table:path-30-shoe}
\footnotesize
\begin{tabular}{r|llllll}
\hline Image \# & \tabincell{c}{cold-start \\ aug path \#}	& \tabincell{c}{cold-start \\ aug path \\ avg length} & \tabincell{c}{warm-start\\ proj path \#} &	\tabincell{c}{warm-start\\  proj path \\ avg length} & \tabincell{c}{warm-start\\ aug path \#} & \tabincell{c}{warm-start\\ aug path\\ avg length} \\
\hline 
1 & 165 & 20.85 & 192 & 9.59 & 0 & 0 \\
2 & 172 & 20.72 & 164 & 7.53 & 24 & 27.21 \\
3 & 175 & 21.91 & 195 & 12.42 & 2 & 34.5 \\
4 & 201 & 22.51 & 164 & 12.34 & 15 & 29.93 \\
5 & 162 & 21.21 & 215 & 9.22 & 0 & 0\\
\hline
\end{tabular}
\end{table}

\begin{table}[ht]
\centering
\caption{Comparison of projection and augmenting paths in cold- and warm-start Ford-Fulkerson, the first $5$ images from the $60 \times 60$ \textsc{Shoe} image sequence}
\label{table:path-60-shoe}
\footnotesize
\begin{tabular}{r|llllll}
\hline Image \# & \tabincell{c}{cold-start \\ aug path \#}	& \tabincell{c}{cold-start \\ aug path \\ avg length} & \tabincell{c}{warm-start\\ proj path \#} &	\tabincell{c}{warm-start\\  proj path \\ avg length} & \tabincell{c}{warm-start\\ aug path \#} & \tabincell{c}{warm-start\\ aug path\\ avg length} \\
\hline 
1 & 585 & 41.22 & 580 & 13.40 & 31 & 58.65 \\
2 & 508 & 40.21 & 562 & 14.06 & 0 & 0 \\
3 & 609 & 42.29 & 469 & 7.13 & 147 & 50.65 \\
4 & 646 & 43.86 & 675 & 14.13 & 17 & 58.24 \\
5 & 595 & 44.64 & 683 & 14.82 & 0 & 0\\
\hline
\end{tabular}
\end{table}

\end{document}